\documentclass[11pt]{article}
\usepackage{helvet}

\usepackage{fancyhdr}

\usepackage{amssymb,amsbsy,amsfonts,amsmath,amsthm,xspace}
\usepackage[top=1in, bottom=1in, left=1in, right=1in]{geometry}

\usepackage{amssymb,amsbsy,amsfonts,amsmath,amsthm,xspace}
\usepackage{mathrsfs}
\usepackage{courier}
\usepackage[utf8]{inputenc}
\usepackage{tikz}
\usetikzlibrary{calc, matrix, arrows, shapes, positioning, fit, shapes.misc, shapes.geometric, decorations.pathreplacing, patterns}
\usepackage{booktabs}
\usepackage{sectsty}
\sectionfont{\fontsize{11}{11}\selectfont}
\subsectionfont{\fontsize{11}{11}\selectfont}
\subsubsectionfont{\fontsize{11}{11}\selectfont}
\usepackage{graphicx}
\usepackage{caption}
\usepackage{subcaption}
\usepackage{setspace}
\usepackage{enumitem}
\usepackage{colortbl}
\usepackage{multirow}
\usepackage{epstopdf}
\usepackage{epsfig}
\usepackage{hyperref}
\usepackage{url}
\usepackage[toc,page]{appendix}
\usepackage{float}
\usepackage{natbib}
\usepackage{color}
\usepackage{verbatim}
\usepackage{authblk}
\usepackage{wrapfig}
\usepackage{algorithm}
\usepackage{algpseudocode}
\usepackage[normalem]{ulem}
\usepackage{xparse}
\usepackage{tabularx}
\usepackage{rotating}
\usepackage{makecell}
\usepackage{titlesec}
\usepackage{cleveref}

\titlespacing*{\section}{0pt}{10pt}{5pt}
\titlespacing*{\subsection}{0pt}{6pt}{3pt}

\def\text#1{\mbox{\rm #1}}

\def\dfrac{\displaystyle\frac}

\hypersetup{
  colorlinks   = true,
  urlcolor     = blue,
  linkcolor    = blue,
  citecolor    = blue
}

\newcommand{\N}{\mathcal{N}}

\newcommand{\nbhdP}{\Tilde P_{i'j}^{(t)}}

\newcommand{\trueP}{P_{ij}^{(t)}}
\newcommand{\truenbhdP}{P_{i'j}^{(t)}}

\tikzset{baseline adjust/.style={baseline={([yshift=-0.5ex]current bounding box.center)}}}
\algrenewcommand\algorithmicrequire{\textbf{Input:}}
\algrenewcommand\algorithmicensure{\textbf{Output:}}

\theoremstyle{definition}
\newtheorem{example}{Example}

\newtheorem{theorem}{Theorem}
\newtheorem{lemma}{Lemma}

\newtheorem{rem}{Remark}
\newtheorem{definition}{Definition}
\newtheorem{assumption}{Assumption}

\crefname{theorem}{Theorem}{Theorems}
\Crefname{theorem}{Theorem}{Theorems}
\crefname{lemma}{Lemma}{Lemmas}
\Crefname{lemma}{Lemma}{Lemmas}
\crefname{proposition}{Proposition}{Propositions}
\Crefname{proposition}{Proposition}{Propositions}
\crefname{corollary}{Corollary}{Corollaries}
\Crefname{corollary}{Corollary}{Corollaries}
\crefname{definition}{Definition}{Definitions}
\Crefname{definition}{Definition}{Definitions}
\crefname{assumption}{Assumption}{Assumptions}
\Crefname{assumption}{Assumption}{Assumptions}

\begin{document}

\begin{center}
\textbf{\large Nonparametric estimation of time-varying network connections by multi-stage smoothing}\\
\bigskip
Jeonghwan Lee\\
Tianxi Li and Adam J. Rothman\\
\medskip
University of Minnesota, Twin Cities
\end{center}

\begin{abstract}
We consider the problem of estimating the underlying edge probabilities of a time-varying network observed at multiple time points. The probability structure is represented by a time-varying graphon that satisfies temporal H\"{o}lder smoothness and piecewise Lipschitz conditions in the latent variables. We propose a multi-stage smoothing estimator that first applies temporal local smoothing to each edge and then performs node-domain smoothing using a data-driven neighborhood construction. An additional temporal smoothing step is introduced as an optional refinement when uniform accuracy over the entire time domain is required. Simulation studies demonstrate the benefits of combining temporal and node-domain smoothing under different generative models. We also apply the method to a real time-varying network dataset and show that it captures both smooth temporal evolution and structural patterns in the connectivity.
\end{abstract}

\section{Introduction}\label{sec:intro}
Statistical network modeling provides a principled framework for analyzing complex relational systems, particularly in settings where interactions evolve over time. Time-varying networks arise in a variety of ubiquitous applications, such as functional brain connectivity \citep{thompson2017static, zhang2020mixed}, gene and genomic regulatory processes \citep{zhang2017finding, bartlett2021two}, and social or economic environments \citep{snijders2010maximum, kolar2010estimating}. In these contexts, measurements collected at different time points record how observed connections fluctuate, forming a sequence of network snapshots that reflect the temporal evolution of the underlying system. For example, fMRI studies yield time-indexed measurements of activity across brain regions, from which researchers construct connectivity networks that change over the scanning period \citep{bassett2011dynamic, rubinov2010complex}. Similarly, in political systems such as the U.S.\ Senate, legislative cosponsorship records give rise to network snapshots that naturally vary across sessions \citep{fowler2006connecting, kirkland2014measurement}. General reviews of time-varying network analysis, including methodological developments and representative applications, are provided in \cite{holme2012temporal} and \cite{kim2018review}.

In this paper, we consider the setting where the connections of a fixed set of $n$ individuals are observed over time. The connections at a given time point are recorded as an unweighted and undirected network, represented by an $n\times n$ adjacency matrix $A^{(t)}$, where $A^{(t)}_{ij}=1$ if nodes $i$ and $j$ are linked at time $t$, and $0$ otherwise. Given observation time points $t_1,\ldots,t_m$, the sequence of time-varying networks can therefore be represented as $\{A^{(t_k)}\}_{k=1}^m$. A commonly used statistical framework for modeling a network at a specific time is the inhomogeneous Erdős–Rényi model. Under this framework, there exists a probability matrix $P \in [0,1]^{n \times n}$, and conditionally on $P$, each edge is established independently as $\mathrm{Bernoulli}(P_{ij})$. To accommodate temporal evolution, we adopt a natural extension of this framework by assuming $A_{ij}^{(t)} \sim \mathrm{Bernoulli}\bigl(P_{ij}^{(t)}\bigr)$, allowing the connection probabilities to change over time. Modeling time-varying networks under this framework thus reduces to modeling and estimating the sequence of probability matrices $P^{(t)}$.

Extensive efforts in the literature have explored time-varying network estimation under various parametric models. A prominent line of work builds on dynamic generalizations of the classical stochastic block model (SBM) \citep{holland1983stochastic}. For instance, \cite{yang2011detecting} and \citet{matias2017statistical} modeled the temporal evolution of block memberships via Markov chains, with the latter further allowing the connectivity parameters to vary over time and establishing consistency guarantees for likelihood-based estimators. Other formulations allow the community-connection probabilities to vary, such as the state-space approach of \citet{xu2014dynamic} or the piecewise-constant evolution model of \citet{bhattacharjee2020change}. Additional structural conditions, such as sparsity or temporal smoothness, have been incorporated to facilitate estimation in high-dimensional settings \citep{keriven2022sparse}. Related autoregressive network formulations model temporal dependence at the edge level rather than through block structures \citep{jiang2023autoregressive}. While dynamic SBMs provide interpretable low-dimensional summaries, the imposed block partitions restrict these models to group-level patterns of temporal evolution.

The latent space models \citep{hoff2002latent}, which treat each node as a point in a low-dimensional Euclidean space, are another widely used model class in studying real-world networks \citep{ma2020universal,zhang2022joint,liu2025grand}. As one of the earliest dynamic latent space models, \citet{sarkar2006dynamic} modeled latent positions as evolving according to Gaussian random-walk processes across successive network snapshots. Subsequent work formalized dynamic latent space models within explicit Markov state-space frameworks \citep{sewell2015latent}, while other approaches considered smoothly varying latent trajectories \citep{macdonald2025latent} within random dot product graphs \citep{rubin2022statistical}. In a Bayesian direction, \citet{loyal2025generalized} proposed a generalized Bayesian framework for dynamic random dot product graphs, providing posterior-based uncertainty quantification for time-evolving latent embeddings. More recently, \citet{athreya2025euclidean} developed a Euclidean-mirror metric to provide a geometric framework for analyzing dynamics in latent-position models.

While dynamic SBMs and latent space models introduce useful structural assumptions, they heavily rely on parametric assumptions—either a finite block partition or a low-dimensional embedding. Such constraints can restrict their flexibility and limit their ability to capture fine-grained, complex connection patterns. In contrast, the graphon model \citep{bickel2009nonparametric} provides a fully nonparametric framework for modeling the connection probability matrix. A time-varying version of the graphon model was first formulated by \citet{pensky2019dynamic}, where $P^{(t)}$ is modeled as an instantiation of a time-varying graphon function that is smooth over time (formally defined in Definition~\ref{def:graphon}). However, while they studied the minimax theory, they did not provide a computationally feasible estimation method. Alternatively, \citet{zhao2019change} developed a graphon-based change-point detection method for independent networks across time, but their model restricts the graphon to remain strictly static between change points, precluding temporal evolution.

Motivated by the generality and flexibility of the graphon, we adopt the time-varying graphon model as our framework. To address the aforementioned computational and structural limitations, we propose a computationally efficient, multi-stage smoothing estimator that integrates time-domain smoothing with node-domain smoothing to recover the evolving edge probabilities. Furthermore, we establish finite-sample estimation error bounds for the proposed estimator, including a uniform convergence rate over time. Through extensive simulation studies, we evaluate its performance across a range of time-varying network generative models and demonstrate its substantial advantages over several benchmark methods. We also apply the method to U.S.\ Senate cosponsorship networks from 1973 to 2024, illustrating that our approach successfully recovers highly interpretable, node-level temporal dynamics.

The remainder of this paper is organized as follows. Section~\ref{sec:method} introduces the time-varying graphon framework and details the proposed multi-stage smoothing estimators. Section~\ref{sec:theory} establishes the main theoretical properties, including finite-sample error bounds. Section~\ref{sec:simulation} presents simulation studies under various time-varying network generative models. In Section~\ref{sec:data}, we apply our methodology to the U.S.\ Senate cosponsorship data. Finally, Section~\ref{sec:discussion} concludes the paper with a discussion of limitations and directions for future research.

\section{Methodology} \label{sec:method}
\subsection{Sparse time-varying graphon model}
Our primary objective is to estimate the sequence of probability matrices $\{P^{(t)}\}_{t\in [0,1]}$ that govern a time-varying network. We formally define the time-varying graphon model below.

\begin{definition}[Sparse Time-Varying Graphon Model \citep{bickel2009nonparametric}]\label{def:graphon}
Suppose there exists a measurable function $f : [0,1]^3 \to [0,1]$ satisfying $f(x,y,t)=f(y,x,t)$ for all $(x,y,t) \in [0,1]^3$. We further assume $\max f(x,y,t)= 1$. A size $n$ time-varying network $\{A^{(t)}\}_{t\in[0,1]}$ is said to follow a sparse time-varying graphon model if there exists a scalar $\rho_n \in (0,1)$, and a set of independent and identically distributed $\text{Uniform}[0,1]$ random variables $\xi_1,\ldots, \xi_n$, such that for each time $t$, the probability matrix $P^{(t)} \in [0,1]^{n\times n}$ is specified by
\begin{align*}
    P_{ij}^{(t)} = \rho_n f(\xi_i, \xi_j, t), 
    \qquad 1 \le i,j \le n, \quad t \in [0,1].
\end{align*}
Given $P^{(t)}$, the observed adjacency matrices are generated by independent Bernoulli trials,
\begin{align*}
    A_{ij}^{(t)} \sim {\rm Bernoulli}\bigl(P_{ij}^{(t)}\bigr), 
    \qquad A_{ij}^{(t)} = A_{ji}^{(t)}, \text{~~for all~~} i<j.
\end{align*}
\end{definition}
\noindent
Note that the scaling factor $\rho_n$ is introduced to accommodate sparse networks. From a theoretical perspective, we are particularly interested in the regime where $\rho_n \to 0$ as $n\to \infty$.

To render the estimation problem tractable, we impose two regularity conditions on the graphon. The first condition enforces temporal smoothness: the graphon $f$ varies smoothly across the time dimension $t$, a natural assumption widely adopted for time-varying data \citep{Tsybakov}. The second condition governs the ``node-wise roughness'' of the graphon. Specifically, we assume that $f$ is piecewise smooth in its first two arguments, allowing for occasional abrupt jumps. Because the latent variables $\xi_i$ are unobservable, this is a latent smoothness, representing a mild structural assumption in the graphon literature \citep{zhang2017estimating}. Our estimation algorithm critically leverages these two structural properties, which are formally stated in Assumptions~\ref{assump:smooth} and \ref{assump:lipschitz} of Section~\ref{sec:theory}. 

A wide range of existing time-varying network models can be viewed as special cases of this framework. In particular, both dynamic stochastic block models and latent space models with smoothly evolving latent positions arise naturally within this formulation, as illustrated by the following examples:
\begin{example} \label{ex:sbm}
\textbf{Dynamic SBM \citep{pensky2019spectral}.}
Assume that the $n$ nodes are partitioned into $K$ communities. Let $G \in \{0,1\}^{n \times K}$ denote the community membership matrix, assumed fixed over time, and let $B^{(t)} \in [0,1]^{K \times K}$ denote the time-varying community connectivity matrix. The dynamic stochastic block model specifies the edge probability matrix as $P^{(t)} = G B^{(t)} G^\top$, while its degree–corrected extension introduces a diagonal matrix $\Theta^{(t)}$ of node-specific degree parameters and takes the form $P^{(t)} = \Theta^{(t)} G B^{(t)} G^\top \Theta^{(t)}$. These models correspond to time-varying graphons of the form $f(\xi_i,\xi_j,t)=B^{(t)}_{kl}$ or $f(\xi_i,\xi_j,t)=\theta_i^{(t)}\theta_j^{(t)}B^{(t)}_{kl}$ when nodes $i$ and $j$ belong to communities $k$ and $l$, respectively.
\end{example}

\begin{example}\label{ex:lsm}
\textbf{Dynamic latent space model \citep{macdonald2025latent}.}
A dynamic latent position model assumes time-indexed latent trajectories $Z_i^{(t)}\in\mathbb{R}^d$, where $d$ is the latent dimension of the embedding, and a link function $\kappa$ so that $P_{ij}^{(t)} = \kappa(Z_i^{(t)}, Z_j^{(t)})$, yielding a time-varying graphon $f(\xi_i,\xi_j,t)=\kappa(X_i^{(t)},X_j^{(t)})$. They provide a smooth example: latent processes $Z_{i}^{(t)}$ with inner-product structure $P^{(t)}=Z^{(t)} Z^{(t)\top}$, which implies $P_{ij}^{(t)} = Z_{i}^{(t)\top} Z_{j}^{(t)}$. Since $Z_i^{(t)}$ evolves smoothly in $t$, their model fits naturally within our smooth time-varying graphon formulation.
\end{example}

\subsection{Nonparametric estimation by multi-stage smoothing}
 In practice, the network is recorded only at a finite collection of time points $t_1,\ldots,t_m$, yielding the observed snapshot sequence $\{A^{(t_k)}\}_{k=1}^m$. Our objective is to estimate the probability matrix $P^{(t)}$ for an arbitrary $t \in [0,1]$---not merely at the sampled time points. Because we assume the model exhibits smoothness both temporally and across nodes, it is natural to incorporate smoothing in both domains. Specifically, we employ a multi-stage smoothing procedure that combines local polynomial smoothing over the temporal domain with neighborhood smoothing over the node domain. However, since the latent variables $\xi_i$ are unobserved, temporal and node-domain smoothness cannot be leveraged symmetrically. Consequently, for any given time point $t\in [0,1]$, we adopt a principled, two-stage sequential smoothing procedure, detailed below.

\textbf{Step 1: Temporal local polynomial smoothing.}
For each node pair $(i,j)$, we construct an intermediate estimate of the probability trajectory $t \mapsto P_{ij}^{(t)}$ by applying a local polynomial smoother to the sequence $\{A_{ij}^{(t_k)}\}_{k=1}^m$. 
The estimator takes the form
\begin{align} \label{eq:local-poly}
\Tilde P_{ij}^{(t)} = \sum_{k=1}^m w_k(t; h_1)\,A_{ij}^{(t_k)}
\end{align}
where the weights $w_k(t; h_1)$ correspond to the equivalent kernel weights derived from an $\ell$th-degree local polynomial regression fit at $t$, parameterized by a bandwidth $h_1$       and the polynomial order $\ell$ \citep{Tsybakov}. By design, network snapshots closer to $t$ receive higher weights. This step explicitly exploits the temporal smoothness of $f(\xi_i,\xi_j,t)$ to mitigate the noise inherent in individual snapshots.

\textbf{Step 2: Node-domain neighborhood smoothing.}
The intermediate estimator $\tilde{P}^{(t)}$ serves as a denoised starting point for approximating the true $P^{(t)}$. We subsequently incorporate node-domain smoothing by borrowing strength from nodes with similar connection profiles, using the neighborhood smoothing procedure of \citet{zhang2017estimating}. Specifically, for each node $i$, we define its empirical distance to another node $i'$ as
\begin{align*}
    \Tilde d_t(i,i') 
    = 
    \sqrt{\max_{l \neq i,i'} 
    \bigl|
       \langle\,\Tilde P_{i.}^{(t)} - \Tilde P_{i'.}^{(t)},\, \Tilde P_{l.}^{(t)}\rangle
    \bigr| \;/\; n},
\end{align*}
which compares the difference between two nodes' temporal probability vectors via their inner products with all other rows. Intuitively, this distance quantifies the similarity between their respective graphon slices $f(\xi_i,\cdot,t)$ and $f(\xi_{i'},\cdot,t)$. It is worth noting that two nodes with a small empirical distance do not necessarily have latent positions $\xi_i$ and $\xi_{i'}$ close to each other. However, provided their graphon slices are similar, it is statistically justified to leverage the connection patterns of $i'$ to help the probability estimates for $i$. A more detailed theoretical discussion of this distance metric is provided by \cite{zhang2017estimating}.

Building on this intuition, the neighborhood for node $i$ at time $t$ is defined as
\begin{align} \label{eq:nbhd}
\mathcal N_i^{(t)} = \{i' : \Tilde d_t(i,i') \le q_{i,t}(h_2)\},
\end{align}
where $q_{i,t}(h_2)$ is the $h_2$-quantile of the distance set $\{\Tilde d_t(i,k)\}_k$. 
The refined estimator of $P_{ij}^{(t)}$ is then constructed by averaging the intermediate estimates within this neighborhood:
\begin{align}\label{eq:Phat}
\hat P_{ij}^{(t)} 
= 
\frac{1}{|\mathcal N_i^{(t)}|} \sum_{i' \in \mathcal N^{(t)}_i} \Tilde P_{i'j}^{(t)}, \quad i<j.    
\end{align}

\begin{rem}\label{rem:smoothing-order}
Our proposed estimation procedure purposefully executes temporal smoothing prior to node-domain neighborhood smoothing. A natural question arises: \emph{why not reverse this order?} This design choice is rooted in following intuition. In practice, raw network connections are highly noisy, particularly in sparse regimes. Because the latent variables $\xi_i$ are unobserved, neighborhood smoothing relies entirely on the empirical data to first estimate the neighborhoods before applying the smoothing step. If applied directly to the raw, noisy snapshots, this neighborhood identification step incurs severe instability over time. Consequently, a follow-up temporal smoothing step would aggregate values over substantially fluctuating neighborhoods, introducing large and unnecessary biases into the final estimator. In contrast, our current design leverages temporal smoothing to properly denoise individual entries over time without inducing such structural biases. The subsequent neighborhood smoothing step then operates on this denoised matrix, yielding a far more stable and accurate neighborhood identification. The empirical advantage of this sequential design is explicitly verified in our simulation experiments in Section~\ref{sec:simulation}. 
\end{rem}

\textbf{(Optimal) Step 3: Temporal refinement smoothing. }
For most applications, the estimator $\hat P^{(t)}$ obtained from the first two stages is sufficiently accurate. However, because the estimated neighborhoods $\mathcal{N}_i^{(t)}$ may shift discretely with $t$, the resulting sequence $\{\hat P^{(t_k)}\}_{k=1}^m$ is not strictly guaranteed to be smooth over time. When strict temporal smoothness is required, we can enforce it by applying a final refinement operation to the sequence $\{\hat{P}^{(t)}\}$. We define the refined estimator as
\begin{equation} \label{eq:third-stage}
\bar P^{(t)} = \sum_{k=1}^m w_k(t; h_3)\,\hat P^{(t_k)},
\end{equation}
using the same local polynomial weighting scheme as in Step~1, but parameterized by a different bandwidth $h_3$. This additional smoothing step guarantees a smooth estimator over time. From a theoretical standpoint, this enforced smoothness is crucial for establishing uniform error bounds. Empirically, however, we observe that $\bar{P}^{(t)}$ the additional refinement step yields only marginal improvements in estimation error, if any. Consequently, we use the two-stage estimator $\hat{P}^{(t)}$ for all empirical evaluations in this paper.

\subsection{Parameter tuning by cross-validation}
In practice, we have to specify the temporal bandwidth $h_1$, the polynomial degree $\ell$, and the neighborhood parameter $h_2$. We propose to employ a cross-validation strategy. 

While using cross-validation to tune neighborhood smoothing on static networks can be done by the edge cross-validation (ECV) method of \cite{li2020network}, directly adapting ECV to our time-varying setting will introduce an intensive computational burden that is practically prohibitive. To circumvent this computational bottleneck, we instead adopt a highly scalable leave-one-time-out cross-validation strategy for the current data structure. Specifically, given the observations $\{A^{(t_k)}\}_{k=1}^m$, one network snapshot at $t_k$ is  held out as the validation set, while the remaining matrices $\{A^{(t_s)} : s \neq k\}$ serve as the training data. For each candidate parameter combination $\theta = (\ell, h_1, h_2)$ drawn from a predefined grid (with $h_1$ and $h_2$ typically spaced on a logarithmic scale), we apply our estimation procedure to the $m-1$ training snapshots and evaluate the model's performance in predicting the unobserved network connections at the validation time point. This procedure is described in Algorithm~\ref{alg:cv}.





\begin{algorithm}[ht]
\caption{Cross-validation procedure for selecting $(\ell, h_1, h_2)$}
\label{alg:cv}
\begin{algorithmic}[1]

\Require Time-varying network $\{A^{(t_k)}\}_{k=1}^m$, 
candidate tuning parameters $\theta$
\ForAll{$\theta = (\ell, h_1, h_2) \in \Theta$}
    \For{$k = 1$ to $m$}
    
        \State Construct training set 
        $A^{(-t_k)} = \{A^{(t_s)} : s \neq k\}$
        
        \State $\tilde{P}^{(t_k)} \gets 
        \textsc{LocalPolynomial}(A^{(-t_k)};\ell,h_1,t_k)$
        
        \State $\hat{P}^{(t_k)} 
        \gets \textsc{nSmooth}(\tilde{P}^{(t_k)};h_2)$
        
        \State Compute validation error
        \[
        \mathbf{Err}_{\mathrm{CV}}(\theta; t_k)
        = 
        \frac{\|\hat{P}^{(t_k)} - A^{(t_k)}\|_F}
             {\|A^{(t_k)}\|_F}
        \]
        
    \EndFor
    
    \State 
    \[
    \mathbf{Err}_{\mathrm{Mean}}(\theta)
    =
    \frac{1}{m}
    \sum_{k=1}^{m}
    \mathbf{Err}_{\mathrm{CV}}(\theta; t_k)
    \]
    
\EndFor

\State 
\[
\hat \theta =
\arg\min_{\theta}
\mathbf{Err}_{\mathrm{Mean}}(\theta)
\]
\Statex
\Statex \textbf{where}
\Statex $\textsc{LocalPolynomial}(A;\ell,h_1,t)$ denotes 
the local polynomial estimator defined in \eqref{eq:local-poly}, 
computed using adjacency matrices $A$ and evaluated at time $t$,
and $\textsc{nSmooth}(\cdot;h_2)$ implements 
Step~2 (node-domain smoothing) as defined in \eqref{eq:Phat}.
\end{algorithmic}
\end{algorithm}

\section{Theoretical properties of the multi-stage smoothing estimator} \label{sec:theory}
In this section we study the theoretical properties of the proposed estimators. For a matrix $B \in \mathbb{R}^{n \times n}$, we use $\|B\|_F$ to denote its Frobenius norm $\|B\|_F = \left( \sum_{i,j} B_{ij}^2 \right)^{1/2}$, and $\|B\|_{2,\infty}$ to denote its $L_{2,\infty}$ operator norm $\|B\|_{2,\infty} = \max_{1 \le i \le n} \|B_{i\cdot}\|_2$, i.e., the maximum Euclidean norm of the rows of $B$.

We start with formally define the regularity conditions for the time-varying graphon model.

\begin{assumption}{(H\"older smooth graphon)} \label{assump:smooth}
Let $f: [0,1]^3 \to [0,1], \; (\xi_i,\xi_j,t) \mapsto f(\xi_i,\xi_j,t)$, be a time-varying graphon, where $\xi_i,\xi_j \in [0,1]$ denote the latent positions of nodes $i$ and $j$, respectively. For each fixed pair $(\xi_i,\xi_j)\in [0,1]^2$, assume the function $t \mapsto f(\xi_i,\xi_j,t)$ belongs to the H\"older class $\mathbb{H}^\beta(L_1)$ on $[0,1]$, with smoothness parameter $\beta > 0$ and H\"older constant $L_1 > 0$. Specifically, for $\ell = \lfloor \beta \rfloor$, we assume $t \mapsto f(\xi_i,\xi_j,t)$ is $\ell$-times differentiable and its $\ell$-th derivative satisfies
\begin{equation*}
\bigl|f^{(\ell)}(\xi_i,\xi_j,t) - f^{(\ell)}(\xi_i,\xi_j,t')\bigr| \;\le\; L_1\,|t - t'|^{\beta - \ell},\quad\forall\,t,t'\in[0,1].
\end{equation*}
\end{assumption}

\noindent
Under this assumption, the network snapshots at different time points are observed independently, while the underlying edge probabilities $P_{ij}^{(t)}$ evolve smoothly in $t$. 

\begin{assumption}{(Piecewise Lipschitz graphon)} \label{assump:lipschitz}
There exist a constant $\delta>0$ and an integer $J \ge 1$ and a partition $0 = x_0 < x_1 < \cdots < x_J = 1$
    such that 
    \[
    \min_{0 \,\leq\, s \,\leq\, J-1} \bigl(x_{s+1} - x_s\bigr) \;>\; \delta
    \] 
 such that  the graphon $f$ is Lipschitz in its first two arguments with constant $L_2$ within each sub-interval $[x_k, x_{k+1})$. Specifically, for each $0 \le k,l \le J-1$ and any $t\in [0,1]$, for all 
    $u, u_1,u_2 \in [x_k,x_{k+1})$
    and 
    $
    v, v_1,v_2 \in [x_l,x_{l+1})
    $,
    \begin{equation*}
        \bigl|f(u_1,v,t) - f(u_2,v,t)\bigr| \;\le\; L_2\,|u_1 - u_2|
    \quad\text{and}\quad
    \bigl|f(u,v_1,t) - f(u,v_2,t)\bigr| \;\le\; L_2\,|v_1 - v_2|.
    \end{equation*}
\end{assumption}

\noindent
As illustrated in Examples~\ref{ex:sbm} and \ref{ex:lsm}, several dynamic network models satisfy the previous two assumptions.

We next formalize the regularity conditions of the local polynomial estimator. For each fixed $t\in[0,1]$, the temporal local polynomial estimator admits the representation given in \eqref{eq:local-poly}, where $w_k(t;h_1)$ are the equivalent kernel weights of a local polynomial of order $\ell$. Following the notations in \cite{Tsybakov}, these weights are determined by the local polynomial design matrix 
$B_{mt}
=
\frac{1}{m h_1}
\sum_{k=1}^m
U\!\left(\frac{t_k - t}{h_1}\right)
U\!\left(\frac{t_k - t}{h_1}\right)^\top
K\!\left(\frac{t_k - t}{h_1}\right)$,
with $U(u)=(1,u,\ldots,u^\ell)^\top$.

We will impose regularity conditions that are widely used in the local polynomial smoothing literature.
\begin{assumption}[LP1--LP3, {\citealp[p.~37]{Tsybakov}}]
\label{assump:LP} 
Assume the following regularity conditions for the local polynomial smoothing:
\begin{enumerate}
    \item There exists a real number \(\lambda_0 > 0\) and a positive integer \(M_0\) such that for all \(m \ge m_0\) and \(t \in [0,1]\), the smallest eigenvalue of \(B_{mt}\), denoted by \(\lambda_{\min}(B_{mt})\), satisfies $\lambda_{\min}(B_{mt}) \;\ge\; \lambda_0.$
    \item There exists a real number \(a_0 > 0\) such that for any interval \(I \subseteq [0,1]\) and all \(m \ge 1\), $
        \frac{1}{m}\sum_{k=1}^m \mathbf{1}\left\{\,t_k \in I\right\}\le a_0\,\max \bigl(|I|,\;1/m\bigr),
    $
    where $|I|$ denotes the length of interval \(I\).
    \item The kernel function \(K\) has compact support within \([-1,1]\). There exists a constant \(K_{\max} < \infty\) such that \(\lvert K(u)\rvert \le K_{\max}\) for all \(u \in \mathbb{R}\).
\end{enumerate}
\end{assumption}
\noindent
They ensure that the local design matrix, $B_{mt}$, is well behaved, the time points $\{t_k\}$ are not overly concentrated on small intervals, and the kernel function satisfies the usual boundedness and compact support conditions.

With the previous assumptions, we now present pointwise error bound for $\hat{P}$.
\begin{theorem} \label{thm:double-smooth}
Under Assumptions~\ref{assump:smooth}, \ref{assump:lipschitz}, and \ref{assump:LP}, suppose $\rho_n \ge \sqrt{\tfrac{\log nm}{nm}}$. Set the temporal and neighborhood bandwidths to be
\begin{equation*}
\begin{aligned}
(h_1^\star,h_2^\star) \asymp
\begin{cases}

\Bigl((\tfrac{\log nm}{\rho_n^{2}nm})^{1/(2\beta+1)},\;
\rho_n^{1/(2\beta+1)}(\tfrac{\log nm}{nm})^{\beta/(2\beta+1)}\Bigr), & \text{~~if~~} n\rho_n \gtrsim \sqrt{nm^{\beta/(\beta+1)}\log nm},\\[8pt]
\Bigl((\tfrac{\log nm}{\rho_n m})^{1/(2\beta+1)},\;
1/n\Bigr), & \text{~~if~~} n\rho_n \lesssim \sqrt{nm^{\beta/(\beta+1)}\log nm},
\end{cases}
\end{aligned}
\end{equation*}
and let $\hat P^{(t)}$ denote the corresponding two–stage smoothing estimator~\eqref{eq:Phat}.
Then there exist constants $C,\tilde C>0$, depending only on the time-varying graphon and the kernel in \eqref{eq:local-poly}, such that for any $t\in [0,1]$ and
for sufficiently large $n,m$,
\begin{equation*}
\begin{aligned}
\frac{1}{n}\,\|\hat P^{(t)} - P^{(t)}\|_{2,\infty}^2
\;\le\;
\begin{cases}
C\,\rho_n^{\frac{2\beta+2}{2\beta+1}}
\Bigl(\dfrac{\log nm}{nm}\Bigr)^{\!\frac{\beta}{2\beta+1}},
&
n\rho_n \,\gtrsim\,  \sqrt{nm^{\frac{\beta}{\beta+1}}\log nm },
\\[6pt]
C\,\rho_n^{\frac{2\beta+2}{2\beta+1}}
\Bigl(\dfrac{\log nm}{m}\Bigr)^{\!\frac{2\beta}{2\beta+1}},
&\text{otherwise,}
\end{cases}
\end{aligned}
\end{equation*}
with probability at least $1-2(nm)^{-(\tilde C+\gamma)}$ for some $\gamma>0$. 
\end{theorem}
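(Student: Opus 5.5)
The proof follows the neighborhood-smoothing argument of \citet{zhang2017estimating}, but with the raw adjacency matrix replaced by the temporally smoothed matrix $\tilde P^{(t)}$. The first step controls $E^{(t)} := \tilde P^{(t)} - P^{(t)}$ entrywise. We write $E^{(t)} = D^{(t)} + Z^{(t)}$, where $D^{(t)}_{ij} = \sum_k w_k(t;h_1) P^{(t_k)}_{ij} - P^{(t)}_{ij}$ is the deterministic bias and $Z^{(t)}_{ij} = \sum_k w_k(t;h_1)(A^{(t_k)}_{ij} - P^{(t_k)}_{ij})$ is the noise.

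\begin{itemize}
\item \emph{Bias.} Under Assumptions~\ref{assump:smooth} and \ref{assump:LP}, the equivalent-kernel weights reproduce polynomials of degree $\ell$, satisfy $\sum_k|w_k|\lesssim 1$ and $\max_k|w_k|\lesssim 1/(mh_1)$, and vanish outside $|t_k-t|\le h_1$ \citep[Lemma 1.3]{Tsybakov}. Hence $|D^{(t)}_{ij}|\lesssim \rho_n L_1 h_1^\beta$ uniformly in $(i,j)$.
\item \emph{Noise.} $Z^{(t)}_{ij}$ is a sum of independent, centered, bounded terms with variance $\lesssim \rho_n/(mh_1)$. Bernstein's inequality and a union bound over pairs give $\max_{ij}|Z^{(t)}_{ij}| \lesssim \sqrt{\rho_n\log(nm)/(mh_1)}$, with probability $1-(nm)^{-c}$.
\end{itemize}

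This yields an entrywise rate $\epsilon_1^2 \asymp \rho_n^2 h_1^{2\beta} + \rho_n\log(nm)/(mh_1)$.

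The second step is the neighborhood analysis, following Lemmas/Theorem 1 of \citet{zhang2017estimating} with $A$ replaced by $\tilde P^{(t)}$. For fixed $t$, the rows of $Z^{(t)}$ are independent across pairs $i<j$ and have sub-exponential entries with the variance proxy above. So the inner products $\langle Z_{i\cdot}, \tilde P_{l\cdot}\rangle$ and $\langle Z_{i\cdot}, Z_{l\cdot}\rangle$ concentrate at scale $\sqrt{n\rho_n\log(nm)/(mh_1)}\cdot\rho_n$ plus lower-order terms. Using Assumption~\ref{assump:lipschitz} and order statistics of the uniform $\xi_i$, we would show the following. With high probability, every $i'$ in $\mathcal N_i^{(t)}$ (with $|\mathcal N_i^{(t)}|\asymp nh_2$) satisfies $\frac1n\|P_{i\cdot}-P_{i'\cdot}\|_2^2 \lesssim \rho_n^2 h_2 + \rho_n\sqrt{\log(nm)/(nmh_1)}\,(\cdot) + \rho_n\|D\|_{\max}$. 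This holds because the population distance $\max_l|\langle P_{i\cdot}-P_{i'\cdot},P_{l\cdot}\rangle|/n$ lower-bounds $\rho_n^{-1}\cdot\frac1n\|P_{i\cdot}-P_{i'\cdot}\|^2$ (up to the $h_2$-approximation term), via the choice of $l$ near $i$ or $i'$.

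We then decompose $\hat P_{ij}-P_{ij}$ into three parts: the average of $(P_{i'j}-P_{ij})$ over $\mathcal N_i^{(t)}$ (the neighborhood bias), the averaged bias $D_{i'j}$, and the averaged noise $Z_{i'j}$. Summing squares over $j$ and dividing by $n$:

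\begin{itemize}
\item The first part is bounded by Cauchy--Schwarz and the neighborhood property.
\item The second part is bounded by $\rho_n^2h_1^{2\beta}$.
\item The third part uses independence across $i'$ (for $j$ outside the neighborhood; the diagonal-block contributions are negligible) to gain the factor $1/(nh_2)$, giving $\rho_n\log(nm)/(mh_1 nh_2)$.
\end{itemize}

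The total bound has the form
\[
\rho_n^2 h_1^{2\beta} + \frac{\rho_n\log nm}{nm h_1 h_2} + \rho_n^2 h_2 + \rho_n\sqrt{\frac{\log nm}{nmh_1}},
\]
with a possible cross term.

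The final step optimizes over $(h_1,h_2)$. In the dense case, balancing all terms yields $h_1^\star$ and $h_2^\star$ as stated and the rate $\rho_n^{(2\beta+2)/(2\beta+1)}(\log nm/(nm))^{\beta/(2\beta+1)}$. The condition $n\rho_n\gtrsim\sqrt{nm^{\beta/(\beta+1)}\log nm}$ is exactly what makes $h_2^\star\ge 1/n$, so that neighborhoods are nonempty. Otherwise we take $h_2=1/n$ (no node smoothing); the bound then reduces to the pure temporal rate with $h_1\asymp(\log nm/(\rho_n m))^{1/(2\beta+1)}$, which gives $\rho_n\cdot\rho_n^{1/(2\beta+1)}\ldots$ after rescaling. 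The probability statement follows from union bounds with exponent $\tilde C+\gamma$.

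The main obstacle is the neighborhood lemma. The distance $\tilde d_t$ involves products of two noisy smoothed rows, $\langle Z_{i\cdot}, Z_{l\cdot}\rangle$, whose entries are now weighted sums with variance $\rho_n/(mh_1)$ rather than Bernoulli variables. Extending the concentration of \citet{zhang2017estimating} requires a Bernstein bound for these dependent bilinear forms, with the correct sparsity scaling and uniformly over all triples $(i,i',l)$. Getting this scaling exactly right is what determines the case threshold and the exponent $\frac{2\beta+2}{2\beta+1}$.
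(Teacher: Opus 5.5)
Your architecture matches the paper's. You have the temporal bias/variance bounds (its Lemma~2), a Zhang--Levina--Zhu neighborhood lemma run on $\tilde P^{(t)}$ (Lemmas~3--4), a split into averaged error and neighborhood bias ($J_1$, $J_2$), and $h_2\asymp 1/n$ with an entrywise Bernstein bound in the second regime. The final bookkeeping, however, does not hold together.

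\textbf{The missing bias term.} Your neighborhood property carries $\rho_n\|D\|_{\max}\asymp\rho_n^2h_1^{\beta}$, since the temporal bias enters $\tilde d_t$ linearly, not squared. Through Cauchy--Schwarz this term passes into the neighborhood-bias part. Yet your displayed total contains only $\rho_n^2h_1^{2\beta}$. The dropped term is what drives the theorem: in the paper, $h_1^\star\asymp(\log nm/(\rho_n^2nm))^{1/(2\beta+1)}$ and the exponent $\frac{2\beta+2}{2\beta+1}$ come from balancing $\rho_n\sqrt{\log nm/(nmh_1)}$ against $\rho_n^2h_1^{\beta}$, with $h_2\asymp\sqrt{\log nm/(nmh_1)}$ fixed from Lemma~4 on. Balancing your displayed total instead gives $h_1\asymp(\log nm/(\rho_n^2nm))^{1/(4\beta+1)}$ and $h_2\asymp\sqrt{\log nm/(\rho_n nmh_1)}$. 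So the claim that balancing yields the stated pair is false, and the faster rate it suggests is not supported by the neighborhood property you wrote. (Your total could only be justified by running the neighborhood lemma against $\mathrm{E}\tilde P^{(t)}$. That matrix inherits piecewise Lipschitz continuity with constant $B_0L_2$, so the bias would enter only through $\frac1n\sum_j D_{ij}^2$. That is not what you state.)

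Since the bandwidths are prescribed, the repair is to keep $\rho_n^2h_1^{\beta}$ and substitute directly. Using $h_2^\star=\sqrt{\log nm/(nmh_1^\star)}$, each of the following is at most the target rate:
$\rho_n^2(h_1^\star)^{\beta}$, $\rho_n\sqrt{\log nm/(nmh_1^\star)}$, $\rho_n\log nm/(nmh_1^\star h_2^\star)$, $\rho_n^2h_2^\star$, and $\rho_n^2(h_1^\star)^{2\beta}$.
For the same reason, your closing paragraph misplaces the difficulty. The exponent is set by this unsquared bias term, not by fine scaling of the bilinear concentration; the paper uses only the crude bound $\mathrm{Var}(\tilde P_{il}\tilde P_{lj})\lesssim\rho_n^2/(mh_1)$.

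\textbf{Two further steps need repair.}

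First, $\mathcal N_i^{(t)}$ is built from the same $\tilde P^{(t)}$. You therefore cannot average $Z_{i'j}$ over it as if it were a fixed index set; the $1/(nh_2)$ gain is available only for the diagonal part. The paper bounds $\max_{i'}\frac1n\sum_j X_{i'j}^2$ and $\max_{i'\neq i''}\bigl|\frac1n\sum_j X_{i'j}X_{i''j}\bigr|$ uniformly. The cross part then contributes a term that does not shrink with $|\mathcal N_i^{(t)}|$ (order $\rho_n\sqrt{\log nm/(nmh_1)}$ in its accounting), which is still within the target at $h_1^\star$.

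Second, the regime condition is not equivalent to $h_2^\star\ge 1/n$. That inequality reads $\rho_n\gtrsim m^{\beta}n^{-(\beta+1)}(\log nm)^{-\beta}$, which is much weaker for sparse graphs. The actual condition is $\rho_n^2\gtrsim mh_1^\star\log nm/n$, i.e.\ $nh_2^\star\rho_n\gtrsim\log nm$. It makes the Bernstein range term $\log nm/n$ in the concentration of $\bigl((\tilde P^{(t)})^2/n\bigr)_{ij}$ negligible against $\rho_n\sqrt{\log nm/(nmh_1)}$. It also gives $nh_2^\star\gtrsim\log nm$, which the latent-proxy count of Lemma~3 needs. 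Your unverified ``lower-order terms'' are exactly where this condition must be invoked.
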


Theorem~\ref{thm:double-smooth} establishes the error bounds for the two-stage smoothing estimator~\eqref{eq:Phat}, characterizing its behavior across two distinct asymptotic regimes. The first regime occurs when only a limited number of observation time points are available ($n \gtrsim m$), under which the sparsity condition $n\rho_n \gtrsim \sqrt{n m^{\beta/(\beta+1)}\log(nm)}$ holds. In this setting, the node-domain neighborhood smoothing step is crucial for achieving accurate estimation. Setting the bandwidth $h_2^\star \asymp \sqrt{\log(nm)/(nmh_1^\star)}$ yields non-degenerate neighborhoods, effectively pooling structural information across nodes. The second regime emerges when the number of time points substantially exceeds the network size ($n \ll m$). In this data-rich temporal regime, local polynomial smoothing alone is sufficient to recover the probability matrix. Correspondingly, our theoretical framework adaptively sets $h_2^\star \asymp 1/n$, which essentially bypasses the neighborhood smoothing step.

To bridge our findings with the existing literature, we consider the special case where the temporal smoothness parameter $\beta \to \infty$ and the number of time points $m$ is bounded. Under these conditions, the error rate in Theorem~\ref{thm:double-smooth} simplifies to $\rho_n (\log n / n)^{1/2}$. Note that while the underlying model can still evolve over time, the limiting scenario $\beta \to \infty$ renders the dynamic model statistically indistinguishable from a static graphon. Consequently, because $m$ is bounded, our problem becomes structurally equivalent to estimating a single network generated from a static graphon model. For such single-network settings, \cite{yang2025network} generalized the result of \cite{zhang2017estimating} and demonstrated that the standard neighborhood smoothing estimator achieves an error rate of $\rho_n (\log n / n)^{1/2}$\footnote{\cite{yang2025network} uses the slightly weaker Frobenius norm rather than the $L_{2,\infty}$ norm.}, perfectly matching our derived rate. This theoretical alignment formally connects our time-varying framework with the classical static graphon estimation problem.

Finally, we state the theoretical guarantees for the three-stage refinement estimator \eqref{eq:third-stage}. The smoothness guaranteed by the additional smoothing step delivers a uniform error rate over $t$, preserving the error rates established in Theorem~\ref{thm:double-smooth}.

\begin{theorem} \label{thm:three-step}
Under the assumptions of Theorem~\ref{thm:double-smooth}, let $\bar P^{(t)}$ be the estimator defined in \eqref{eq:third-stage}. 
Set the bandwidth $h_3$ as
\begin{equation*}
\begin{aligned}
 h_3^\star=
\begin{cases}
\rho_n^{\frac{-2}{2\beta+1}}
\bigl(\tfrac{\log nm}{nm}\bigr)^{1/(2\beta+1)},
& n\rho_n \,\gtrsim\,  \sqrt{nm^{\beta/(\beta+1)}\log nm},\\[6pt]
\rho_n^{\frac{-2}{2\beta+1}}
\bigl(\tfrac{\log nm}{m}\bigr)^{2/(2\beta+1)},
& n\rho_n \,\lesssim\,  \sqrt{nm^{\beta/(\beta+1)}\log nm}.
\end{cases}   
\end{aligned}    
\end{equation*}
There exist constants $C,\tilde C>0$ such that,
\begin{equation*}
    \begin{aligned}
        \sup_{t \in [0,1]}\frac{1}{n}\|\bar P^{(t)}-P^{(t)}\|_{2,\infty}^2
\;\le\;
\begin{cases}
C\rho_n^{\frac{2\beta+2}{2\beta+1}}
\bigl(\tfrac{\log nm}{nm}\bigr)^{\beta/(2\beta+1)},
&
n\rho_n \,\gtrsim\,  \sqrt{nm^{\beta/(\beta+1)}\log nm},\\[6pt]
C\rho_n^{\frac{2\beta+2}{2\beta+1}}
\bigl(\tfrac{\log nm}{m}\bigr)^{2\beta/(2\beta+1)},
&
n\rho_n \,\lesssim\,  \sqrt{nm^{\beta/(\beta+1)}\log nm},
\end{cases}
    \end{aligned}
\end{equation*}
with probability at least $1-2(nm)^{-(\tilde C+\gamma)}$ for some $\gamma>0$.
\end{theorem}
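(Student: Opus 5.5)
The plan is to decompose the error of the three-stage estimator at an arbitrary $t$ into a stochastic/propagated part and a deterministic smoothing bias:
\[
\bar P^{(t)}-P^{(t)}=\sum_{k=1}^m w_k(t;h_3)\bigl(\hat P^{(t_k)}-P^{(t_k)}\bigr)+\Bigl(\sum_{k=1}^m w_k(t;h_3)P^{(t_k)}-P^{(t)}\Bigr).
\]
Applying the triangle inequality in $\|\cdot\|_{2,\infty}$ gives a bound with two terms, which I will handle separately and then balance through the choice of $h_3$.

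\textbf{Step 1 (first term).} Under Assumption~\ref{assump:LP}, the standard properties of local polynomial equivalent weights (Tsybakov, Lemma~1.3) yield $\sum_k|w_k(t;h_3)|\le C_*$ uniformly in $t$. Here $C_*$ depends only on $\lambda_0,a_0,K_{\max}$; at least $m h_3\gtrsim 1$ is needed, and that holds for $h_3^\star$. Moreover, only indices with $|t_k-t|\le h_3$ contribute. Hence
\[
\bigl\|\textstyle\sum_k w_k(\hat P^{(t_k)}-P^{(t_k)})\bigr\|_{2,\infty}\le C_*\max_{1\le k\le m}\|\hat P^{(t_k)}-P^{(t_k)}\|_{2,\infty}.
\]
Theorem~\ref{thm:double-smooth} controls each $\hat P^{(t_k)}$ at a fixed time. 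A union bound over the $m$ design points costs a factor $m$ in the failure probability: $m\cdot 2(nm)^{-(\tilde C+\gamma)}\le 2(nm)^{-(\tilde C+\gamma-1)}$. So I would rerun Theorem~\ref{thm:double-smooth} with its constant $\tilde C$ enlarged by one; this only changes $C$, since the log factors come from $\log nm$ and can absorb this. The key point is that this term is already uniform in $t$: the supremum over the continuum of $t$ is handled entirely by the deterministic weights, because randomness enters only through the finitely many $\hat P^{(t_k)}$. This is exactly why the third stage buys uniformity.

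\textbf{Step 2 (bias term).} For each entry, $\sum_k w_k(t;h_3)P^{(t_k)}_{ij}-P^{(t)}_{ij}=\rho_n\bigl(\sum_k w_k f(\xi_i,\xi_j,t_k)-f(\xi_i,\xi_j,t)\bigr)$. Polynomial reproduction of the LP weights up to degree $\ell$, Taylor expansion, and Assumption~\ref{assump:smooth} give $|\cdot|\le C_*L_1\rho_n h_3^\beta$ uniformly in $t,i,j$. Therefore the row norm is at most $C\rho_n\sqrt n\, h_3^\beta$, and $\frac1n\|\cdot\|_{2,\infty}^2\le C\rho_n^2h_3^{2\beta}$.

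\textbf{Step 3 (balancing).} Plug in $h_3^\star$. In the first regime, $\rho_n^2 h_3^{2\beta}=\rho_n^{2-4\beta/(2\beta+1)}(\log nm/nm)^{2\beta/(2\beta+1)}=\rho_n^{2/(2\beta+1)}(\log nm/nm)^{2\beta/(2\beta+1)}$. This is at most the target $\rho_n^{(2\beta+2)/(2\beta+1)}(\log nm/nm)^{\beta/(2\beta+1)}$ exactly when $\rho_n\ge(\log nm/nm)^{\beta/(2\beta+1)}$, and that follows from $\rho_n\ge\sqrt{\log nm/nm}$ whenever $\beta\ge$ roughly $1/2$. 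The second regime is analogous. Here the main obstacle is this bookkeeping: the bias must not exceed the Theorem~\ref{thm:double-smooth} rate under the stated sparsity condition, and $h_3^\star\le1$ with $mh_3^\star\to\infty$ must hold. If the exponent check fails for small $\beta$, I would fall back on using $\rho_n\ge\sqrt{\log nm/(nm)}$ more sharply, or note that $h_3$ is only needed up to constants, so any $h_3\lesssim h_3^\star$ with $mh_3\gtrsim1$ also works. A secondary subtlety is verifying that the weight bound $\sum_k|w_k|\le C_*$ holds for $t$ near the boundary of $[0,1]$; condition LP1 is uniform in $t$, so it does. Squaring, dividing by $n$, and combining the two terms via $(a+b)^2\le2a^2+2b^2$ completes the argument.
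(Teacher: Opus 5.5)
Your proof follows the paper's route: the same split into a propagated term and a deterministic smoothing bias, and the same reduction of the propagated term to $\max_k$ of the Theorem~\ref{thm:double-smooth} bound via $\sum_k|w_k(t;h_3)|\le B_0$, which is what makes the bound uniform in $t$.

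Two of your details are tighter than the paper's. First, you make the union bound over the $m$ design points explicit; the paper applies the single-$t$ probability to a maximum over $k$ without comment. Second, you bound the bias entrywise through polynomial reproduction and get $\rho_n^2h_3^{2\beta}$. The paper instead applies the triangle inequality first and records $\rho_n^2h_3^{\beta}$, and a norm-first bound cannot produce $h_3^\beta$ at all when $\beta>2$. That looser bound is the one for which $h_3^\star$ exactly balances. With your bound, the bias at $h_3^\star$ is dominated rather than matched, which is harmless because the propagated term depends on $h_3$ only through the requirement $mh_3\gtrsim1$.

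The step that does not work as written is the Step~3 arithmetic. In the first regime the ratio of your bias bound to the target is
\[
\frac{\rho_n^{2/(2\beta+1)}(\log nm/nm)^{2\beta/(2\beta+1)}}{\rho_n^{(2\beta+2)/(2\beta+1)}(\log nm/nm)^{\beta/(2\beta+1)}}=\Bigl(\frac{\log nm}{\rho_n^2\,nm}\Bigr)^{\beta/(2\beta+1)} .
\]
So the requirement is exactly $\rho_n\ge\sqrt{\log nm/(nm)}$, the standing hypothesis, for every $\beta>0$. Your condition $\rho_n\ge(\log nm/nm)^{\beta/(2\beta+1)}$ drops the exponent $2\beta/(2\beta+1)$ on $\rho_n$. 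It is also \emph{not} implied by $\rho_n\ge\sqrt{\log nm/(nm)}$ for any $\beta$, since $\beta/(2\beta+1)<1/2$. So your justification fails as written. The correct computation removes the small-$\beta$ worry, and the fallback is not needed; it would not prove the theorem anyway, because $h_3=h_3^\star$ is fixed in the statement.

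In the second regime the same ratio is $\bigl(\log nm/(\rho_n m)\bigr)^{2\beta/(2\beta+1)}$, so you need $\rho_n\gtrsim\log nm/m$. This is not implied by the stated sparsity condition. It is, however, exactly the condition $\rho_n\gtrsim\log nm/(mh_1^\star)$ that the paper's proof of Theorem~\ref{thm:double-smooth} already uses in that regime, and it is also what gives $h_3^\star\le1$. State it explicitly rather than writing ``analogous.''

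Finally, $mh_3^\star\gtrsim1$ does not hold automatically. In the first regime
\[
mh_3^\star=\bigl(m^{2\beta}\log nm/(\rho_n^2 n)\bigr)^{1/(2\beta+1)},
\]
which tends to zero when $n\rho_n^2\gg m^{2\beta}\log nm$, for instance with bounded $m$. The paper asserts $mh_3\to\infty$ without proof, so this is an implicit assumption you share with it. List it as an assumption rather than claiming it holds.
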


It is worth noting that, although temporal local polynomial smoothing inevitably encounters boundary effects due to asymmetric effective neighborhoods near the endpoints of $[0,1]$, Assumption~\ref{assump:LP} guarantees that the requisite regularity conditions hold uniformly. Consequently, the derived error rates remain theoretically valid for all $t \in [0,1]$. While the empirical estimation error may exhibit an increase near the temporal boundaries due to these one-sided smoothing neighborhoods, this boundary effect is strictly controlled and does not alter the asymptotic order of the bounds.

\section{Simulation Experiments} \label{sec:simulation}
In this section, we evaluate the empirical performance of the proposed method using extensive simulation studies. To provide a comprehensive assessment, we compare our approach against several benchmark methods applicable to the same time-varying data structure.

To thoroughly evaluate the generality and robustness of our method, we design a diverse set of generative scenarios that encompass various structural assumptions and temporal dynamics, detailed as follows.

\paragraph{Dynamic SBM-sine.} We first consider a dynamic stochastic block model (SBM) with $K=4$ communities. The within-block and between-block connection probabilities are governed by $a\bigl(1 + 0.5\sin(2\pi t)\bigr)$ and $b\bigl(1 + 0.5\sin(2\pi t)\bigr)$, respectively. We set the out-in ratio to $b/a = 0.5$, carefully calibrating the scaling parameters $a$ and $b$ to achieve a desired expected average node degree at the initial time point. Because all connection probabilities modulate according to a shared sine function trend, the block probability matrix at each time point is simply a scalar multiple of a fixed matrix. Consequently, it remains positive definite across all temporal snapshots. We refer to this baseline parametric setting as ``Dynamic SBM-sine.''

\paragraph{Dynamic SBM-NPD.} To evaluate the robustness of our method beyond the structurally restrictive setting of a shared global trend, we introduce a modified dynamic SBM characterized by heterogeneous temporal dynamics. Specifically, the four communities are partitioned into two distinct groups: for the first two communities, the within-block probabilities follow the inverted trend $a\bigl(1 - 0.5\sin(2\pi t)\bigr)$, while for the remaining two, they follow the original pattern $a\bigl(1 + 0.5\sin(2\pi t)\bigr)$. For the between-block connections, each community pair is randomly assigned one of these two temporal trajectories, ensuring that approximately half of the between-block probabilities evolve according to $b\bigl(1 + 0.5\sin(2\pi t)\bigr)$ and the other half follow $b\bigl(1 - 0.5\sin(2\pi t)\bigr)$. Intuitively, this targeted construction breaks the global scaling structure, producing network snapshots where the block probability matrix becomes non-positive definite. We designate this generative model as ``Dynamic SBM-NPD.''



\paragraph{Dynamic RDPG.} We next consider the dynamic RDPG introduced by \cite{macdonald2025latent}, wherein the latent node positions evolve smoothly over time. To ensure a fair and direct comparison with the FASE method, we adopt the exact simulation specifications detailed in \cite{macdonald2025latent}, using the R package \texttt{fase} \citep{fase}.

\paragraph{Dynamic latent distance model.} We also evaluate our method under the latent distance model \citep{hoff2002latent, sewell2015latent}. Here, the underlying latent trajectories are generated identically to those in the dynamic RDPG setting. However, in contrast to the RDPG, the latent distance formulation relies on a non-linear logit link function, $P_{ij}^{(t)} = \text{logit}^{-1}(\alpha -\|z^{(t)}_i - z^{(t)}_j\|_2)$ where $z_i$ denotes the latent position (e.g., generated via the \texttt{fase} package). As a result, a probability matrix $P^{(t)}$ is not low-rank at any given time point.

Across all evaluated scenarios, we simulate time-varying networks comprising $n=600$ nodes, observed over $m=100$ equally spaced time points $t \in \{1/m, 2/m, \dots, 1\}$. To ensure rigorous empirical comparability, the generative parameters within each model are systematically calibrated to produce an expected initial average node degree of approximately 50.

To provide a comprehensive assessment of our proposed approach, we include several benchmark methods in our empirical evaluation:

\begin{itemize}
    \item \textbf{Reversed}: This is the variant of our multi-stage smoothing procedure discussed in Remark~\ref{rem:smoothing-order}. It reverses the order of the two smoothing operators by applying node-domain neighborhood smoothing prior to temporal smoothing. 
    \item \textbf{Independent Neighborhood Smoothing}: This method applies standard neighborhood smoothing  \citep{zhang2017estimating} to each temporal snapshot independently, establishing a baseline that completely ignores temporal information.
    \item \textbf{Pooled Neighborhood Smoothing} \citep{zhao2019change}: This approach applies neighborhood smoothing directly to the time-averaged network, thereby yielding a single time-invariant probability estimate. It represents the extreme case of imposing a constant model over time. 
    \item \textbf{The FASE method} \citep{macdonald2025latent}: This approach uses the penalized estimation framework designed for the dynamic RDPG. It can be seen as an estimator for time-varying low-rank, positive semi-definite probability matrices. 
    \item \textbf{Dynamic Spectral Clustering} \citep{pensky2019spectral}: This approach first smooths the adjacency matrices using kernel over time to estimate the time-varying edge probability matrices, and then applies spectral clustering to recover the community structure. Based on the estimated communities, we further construct a block-wise estimator by averaging the entries of the estimated probability matrix within each pair of communities, yielding a SBM–type approximation. 
\end{itemize}

For the reversed-order baseline, the tuning parameter is determined using an oracle criterion that explicitly minimizes the estimation error, thereby establishing an idealized upper bound on its performance. For all remaining benchmark methods, the tuning parameters are systematically selected via data-driven cross-validation, strictly adhering to the protocols recommended in their respective original publications. Finally, to implement the temporal local polynomial smoothing within our proposed framework, we employ a compactly supported tricube kernel.

In terms of implementation, we closely follow the publicly available codes or recommended procedures from the original works whenever possible. In particular, the implementation of the dynamic spectral clustering method is based on the code provided by \cite{pensky2019spectral}. The neighborhood smoothing procedure is implemented following the approach in the \texttt{randnet} package \citep{randnet}. For latent position generation, we utilize the \texttt{fase} package \citep{fase}.


The performance is evaluated at each time point using relative errors under the Frobenius norm and the $L_{2,\infty}$ norm. Let $P^{(t)}$ denote the true probability matrix at time $t$ and $\hat P^{(t)}$ its estimator. The relative errors at time $t$ are defined as
\begin{align*}
\mathrm{Err}_F^{(t)} 
&= \frac{\|\hat P^{(t)} - P^{(t)}\|_F}
         {\|P^{(t)}\|_F}, \quad
\mathrm{Err}_{2,\infty}^{(t)} 
= \frac{\|\hat P^{(t)} - P^{(t)}\|_{2,\infty}}
         {\|P^{(t)}\|_{2,\infty}}.
\end{align*}
Under each experimental configuration, we generate 100 independent replicates and report the average errors across replicates for each fixed time point $t$.

\begin{figure}[htbp]
    \centering

    \begin{subfigure}{\textwidth}
        \centering
        \includegraphics[width=\textwidth]{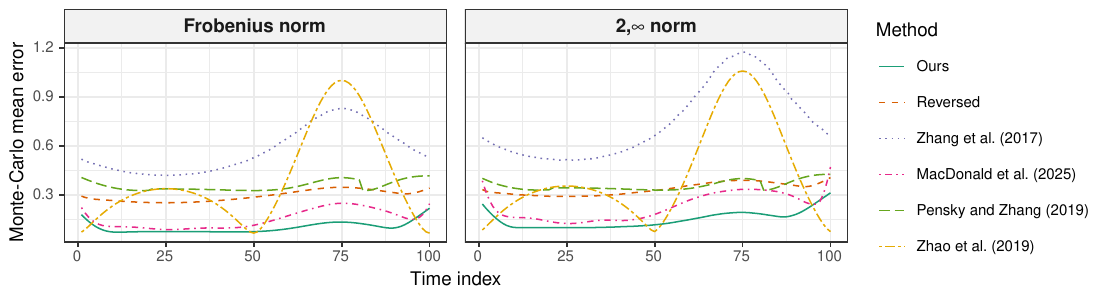}
        \vspace{-5mm}
        \caption{Dynamic SBM-sine}
    \end{subfigure}\vspace{0.5em}

    \begin{subfigure}{\textwidth}
        \centering
        \includegraphics[width=\textwidth]{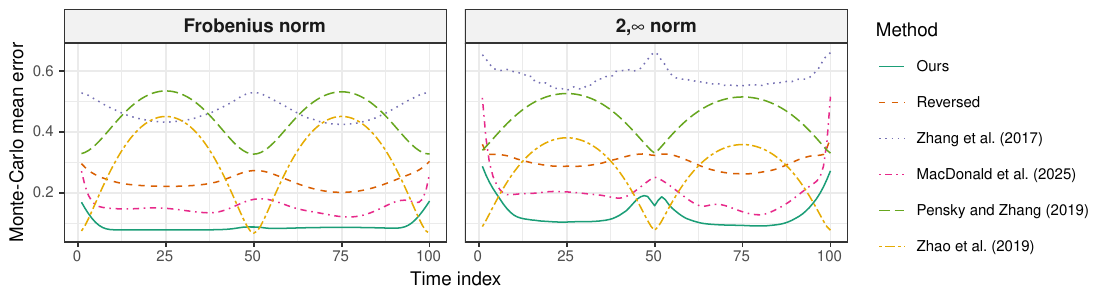}
        \vspace{-5mm}
        \caption{Dynamic SBM-NPD}
    \end{subfigure}\vspace{0.5em}

    \begin{subfigure}{\textwidth}
        \centering
        \includegraphics[width=\textwidth]{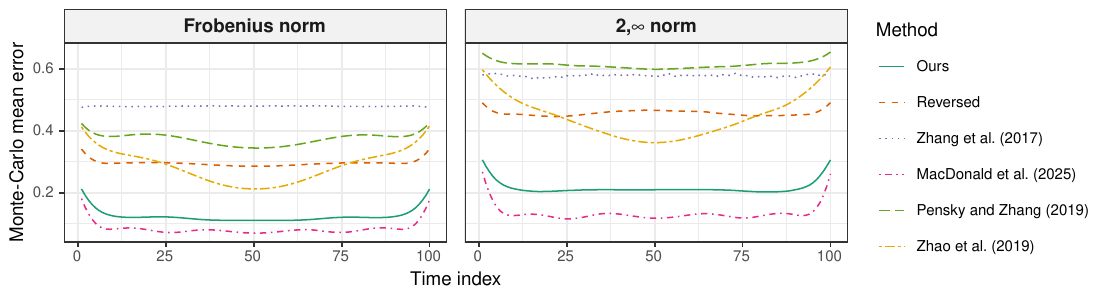}
        \vspace{-5mm}
        \caption{Dynamic RDPG}
    \end{subfigure}\vspace{0.5em}

    \begin{subfigure}{\textwidth}
        \centering
        \includegraphics[width=\textwidth]{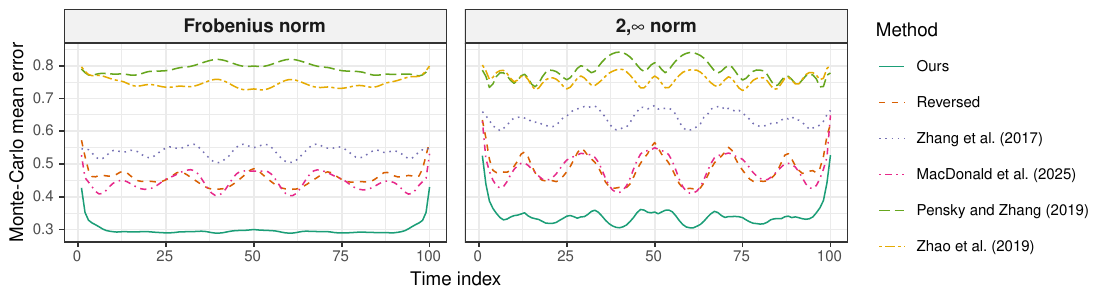}
        \vspace{-5mm}
        \caption{Dynamic latent distance model}
    \end{subfigure}
      \caption{Performance of six estimators under different data generating models. Each curve represents the Monte Carlo mean error over 100 replications.}
    \label{fig:error_curves_all}
\end{figure}

Figure \ref{fig:error_curves_all} reports the mean relative errors over time across the four data-generating mechanisms. The performance comparison results under the Frobenius norm and $L_{2,\infty}$ norm are consistent with each other. For our method, a moderate increase in error is observed near the temporal boundaries. This boundary effect is expected for temporal local polynomial smoothing over a finite time horizon. Apart from these boundary regions, the error curves remain comparatively stable over the interior time interval.  Also, our proposed method consistently outperforms the ``Reversed'' variant (with oracle tuning). This uniform advantage convincingly demonstrates the importance of the smoothing order in the current estimation problem, as discussed in Remark~\ref{rem:smoothing-order}.

The independent neighborhood smoothing method of \cite{zhang2017estimating}, which completely ignores the temporal relation, fails to achieve reasonable performance in the current context.  The pooled neighborhood smoothing of \cite{zhao2019change}, which assumes identical models over time, can give a reasonable estimate only at a few time points when the pattern happens to be close to the over-time average, but generally fails to deliver an adequate accuracy. The dynamic spectral clustering of \cite{pensky2019spectral} is not effective either. This could be due to the instability of spectral clustering when combined with the smoothing operator. Another potential reason for this, under the SBM, could be the signal cancellation, as investigated by \cite{lei2023bias}.

Among all evaluated benchmark methods, the only approach that achieves comparable empirical performance to ours in specific settings is the FASE method of \cite{macdonald2025latent}. Specifically, FASE performs comparably to our method under the SBM-sine configuration and exhibits superior accuracy under the dynamic RDPG model. These empirical patterns align with theoretical expectations: FASE is explicitly designed for the dynamic RDPG framework, which structurally encompasses the dynamic SBM with a positive semi-definite probability matrix (such as the SBM-sine) as a special case. Under such correct model specification, FASE leverages its restrictive parametric design to achieve high statistical efficiency. Nevertheless, our nonparametric approach remains highly competitive even within the dynamic RDPG regime, sacrificing minimal accuracy in exchange for its broader generality.

In contrast, under both the SBM-NPD and the dynamic latent distance models, FASE exhibits a marked degradation in performance and is strictly inferior to our proposed estimator. This divergence highlights the vulnerability of rigid structural assumptions: the SBM-NPD setting does not consistently yield positive semi-definite probability matrices over time, and the dynamic latent distance model inherently falls outside the RDPG class. Thanks to the structural generality of the time-varying graphon framework, our method seamlessly adapts to these complex temporal dynamics, maintaining robust estimation accuracy where FASE falters. Overall, our proposed estimator demonstrates impressive adaptivity and principled robustness across all generative settings under evaluation.

\section{Data Application: U.S. Senate Cosponsorship} \label{sec:data}
To demonstrate our proposed methodology, we analyze the U.S. Senate cosponsorship network from the 93rd to the 118th Congress (1973--2024). Our analysis integrates two complementary data sources due to differences in historical coverage. For earlier Congresses (93rd--108th, 1973--2004), we used the curated cosponsorship dataset available through Harvard Dataverse \citep{fowler2006connecting}, which provides structured bill-level sponsorship and cosponsorship information. For more recent Congresses (109th onward), we retrieved the official legislative bill status and cosponsorship records directly from the Library of Congress bulk data repository (\url{https://www.govinfo.gov/bulkdata/BILLSTATUS}), as the Dataverse dataset does not extend beyond 2004. We aggregated individual senators to the level of their home states. For each calendar year, we constructed an unweighted, undirected network among the $50$ U.S. states: an edge is established between two states (set to $1$) if the total volume of cosponsored bills between their respective senators strictly exceeds the yearly mean, defined as the average of these pairwise cosponsorship totals across all state pairs in that year, and to $0$ otherwise. This aggregation yields a discrete sequence of $m = 52$ time-varying network snapshots, which we use to examine the temporal evolution of cross-state legislative collaboration.

Our primary objective is to characterize how the cross-state cosponsorship structure has evolved over time, and to identify groups of states whose connection patterns follow common temporal dynamics. To this end, we directly apply the proposed two-stage smoothing estimator \eqref{eq:Phat} to the observed sequence $\{A^{(t_k)}\}_{k=1}^{52}$, yielding a sequence of smoothed probability matrices $\{\hat{P}^{(t_k)}\}_{k=1}^{52}$. This produces an estimated trajectory for each state pair. The tuning parameters $(\ell, h_1, h_2)$ are selected by the leave-one-time-out cross-validation procedure described in Algorithm~\ref{alg:cv}. Building on the estimated probability matrices, we summarize each state by its full connectivity trajectory across the entire observation horizon. For each state $i \in \{1, \ldots, 50\}$, let $v_i \in \mathbb{R}^{2600}$ denote the trajectory vector obtained by stacking the $i$-th row of $\hat{P}^{(t)}$ across all $52$ years, so that $v_i$ records state $i$'s estimated connection probabilities over the observation period. We then compute pairwise Pearson correlations among $\{v_i\}_{i=1}^{50}$ to obtain a $50 \times 50$ similarity matrix $S$, rescale its entries to $[0,1]$ via $\tilde{S}_{ij} = (S_{ij} + 1)/2$, and define the dissimilarity matrix as $D_{ij} = 1 - \tilde{S}_{ij}$.

To identify groups of states with similar connectivity trajectories, we apply Ward hierarchical clustering \citep{murtagh2014ward} to $D$. We select the dendrogram cut by maximizing the average Silhouette criterion over $K \in \{3,4,5,6\}$. The upper bound $K \le 6$ is imposed to keep the number of clusters small enough for interpretation. The optimal cut is achieved at $K = 4$, and Table~\ref{tab:clusters} reports the resulting cluster memberships. To examine the temporal pattern within each cluster, we compute the mean within-cluster affiliation
\begin{equation*}
\bar{P}_{aa}(t) \;=\; \frac{1}{|C_a|\,(|C_a|-1)} \sum_{\substack{i, j \in C_a \\ i \neq j}} \hat{P}^{(t)}_{ij},
\end{equation*}
where $C_a$ denotes the set of states in the $a$-th cluster ($a = 1, 2, 3, 4$). This yields a single scalar per year per cluster. Figure~\ref{fig:clusters} displays the four resulting trajectories over $1973$--$2024$. The four clusters exhibit clearly distinct temporal patterns, summarized below.

\begin{table}[t]
\centering
\begin{tabular}{|c|p{0.78\textwidth}|}
\hline
\textbf{Cluster} & \textbf{States} \\
\hline
1 & AL, AZ, FL, IA, KY, MS, MT, NV, NC, ND, SC, SD, TN, UT, WV \\
\hline
2 & AK, ID, IN, KS, MO, NH, OK, TX, WY \\
\hline
3 & AR, GA, LA, NE \\
\hline
4 & CA, CO, CT, DE, HI, IL, ME, MD, MA, MI, MN, NJ, NM, NY, OH,
OR, PA, RI, VT, VA, WA, WI \\
\hline
\end{tabular}
\caption{$K=4$ cluster memberships for the 50 U.S. states.}
\label{tab:clusters}
\end{table}

\begin{figure}[htbp]
    \centering
    \begin{subfigure}{0.8\linewidth}
        \centering
        \includegraphics[width=\linewidth]{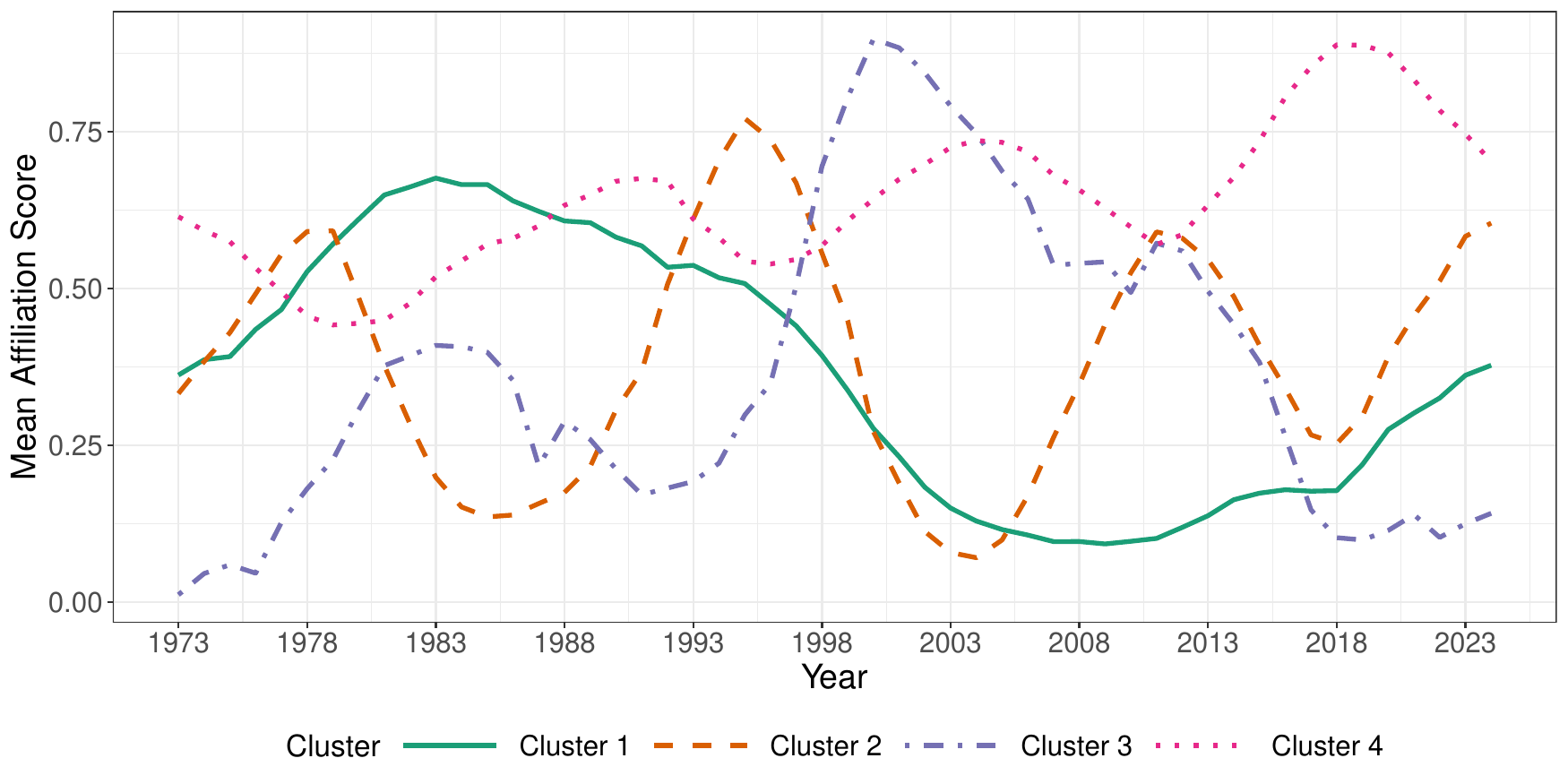}
        \caption{Cluster-wise mean trajectories over time ($K=4$).}
    \end{subfigure}

    \vspace{0.75em}

    \begin{subfigure}{0.45\linewidth}
        \centering
        \includegraphics[width=\linewidth]{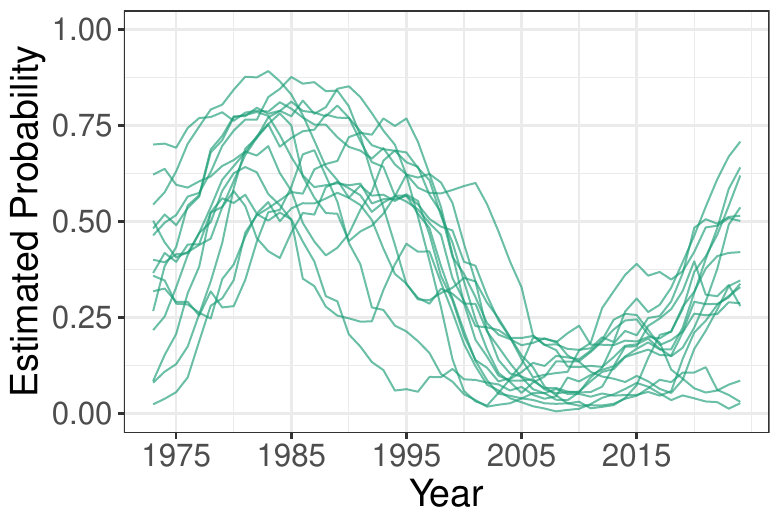}
        \caption{Cluster 1}
    \end{subfigure}
    \hfill
    \begin{subfigure}{0.45\linewidth}
        \centering
        \includegraphics[width=\linewidth]{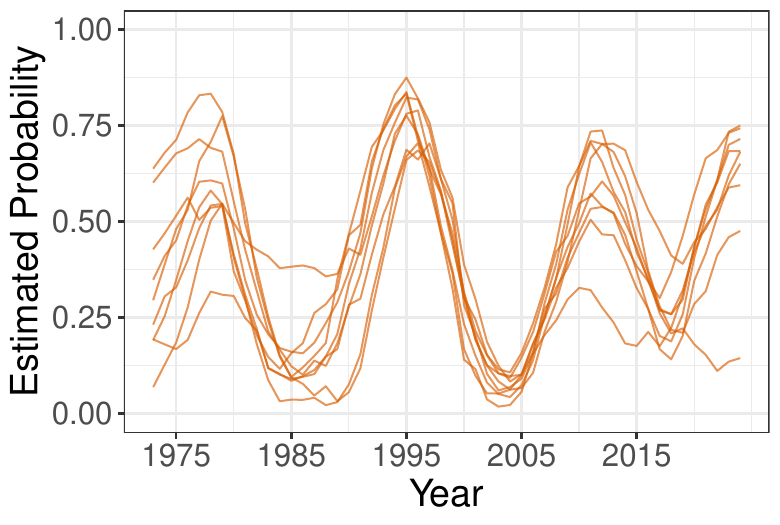}
        \caption{Cluster 2}
    \end{subfigure}

    \vspace{0.5em}

    \begin{subfigure}{0.45\linewidth}
        \centering
        \includegraphics[width=\linewidth]{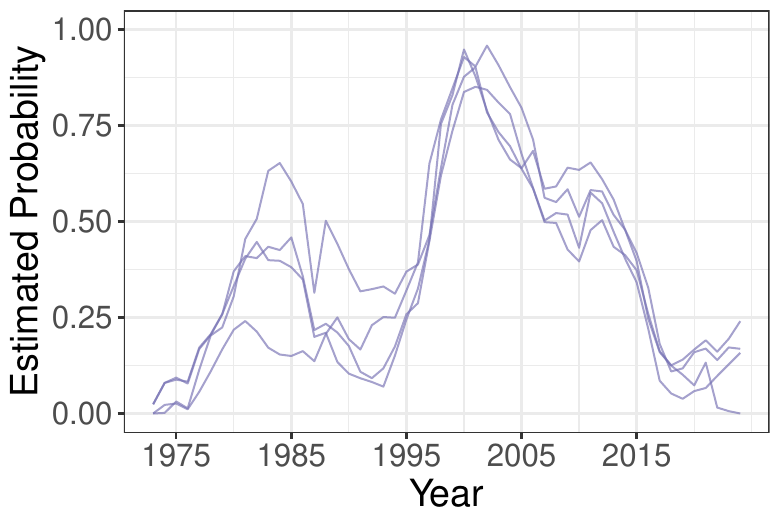}
        \caption{Cluster 3}
    \end{subfigure}
    \hfill
    \begin{subfigure}{0.45\linewidth}
        \centering
        \includegraphics[width=\linewidth]{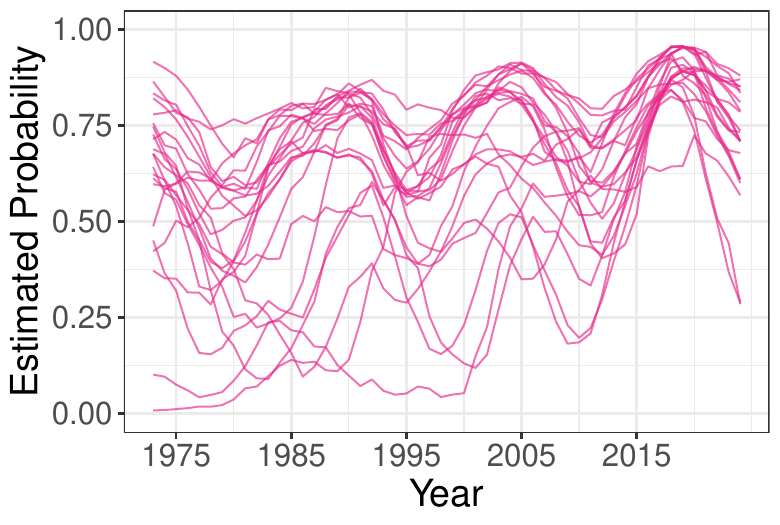}
        \caption{Cluster 4}
    \end{subfigure}
    \caption{
    Cluster-level mean trajectories. 
    The top panel displays the cluster-wise mean trajectories over time for the $K=4$ clusters identified by hierarchical clustering. 
    The lower panels show individual state trajectories stratified by cluster.
    }
    \label{fig:clusters}
\end{figure}

Cluster 1 starts at moderate levels in the mid-1970s, rises through the late 1970s and early 1980s, and remains relatively elevated through the early 1990s before declining sharply around the turn of the century. It then stays at low levels for much of the 2000s and 2010s, with a modest rebound in the most recent years. This suggests that the states in this cluster shared a strong but historically localized pattern of cosponsorship behavior that weakened substantially in the more recent period, consistent with the erosion of an older conservative legislative alignment. Cluster 2 displays the most strikingly cyclic pattern, with three pronounced peaks around 1980, 1995, and the early 2020s, separated by periods of substantially lower affiliation. The mid-1990s peak coincides with the Republican Revolution following the 1994 elections, but the recurrence of similarly sharp peaks suggests that the states in this cluster align closely only during episodic legislative pushes rather than maintaining a stable common agenda. Cluster 3 is also non-monotone, with a modest rise in the mid-1980s, a sharp peak around the early 2000s, and another smaller increase in the early 2010s. Given its small size and predominantly Southern composition, these repeated increases are consistent with localized regional realignment rather than a stable long-run bloc. Cluster 4, the largest cluster, contains a broad collection of mostly coastal and industrial states and exhibits a consistently wide range of affiliation values throughout the period. Its trajectories are therefore best read as a broad baseline of shared legislative connectivity rather than as evidence of a sharply defined political coalition.

\begin{figure}[htbp]
    \centering
\includegraphics[width=\linewidth]{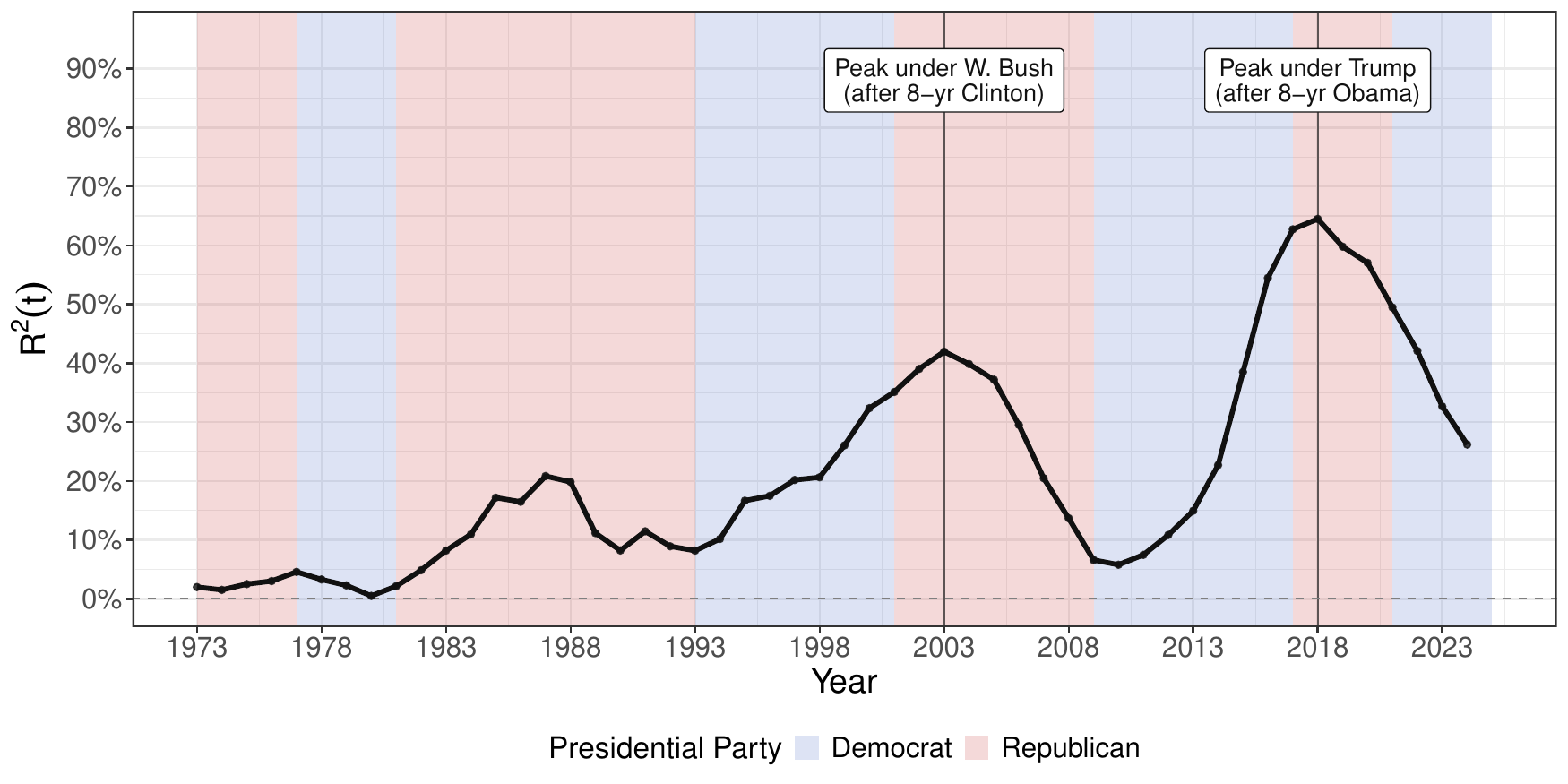}
    \caption{Group-based polarization score $R^{2}(t)$ for the smoothed cosponsorship network, 1973--2024. Background shading indicates the presidential party (blue $=$ Democrat, red $=$ Republican); vertical lines mark the most pronounced peaks of $R^{2}(t)$.}
    \label{fig:polarization}
\end{figure}

As an additional analysis, we examine temporal polarization in the cosponsorship network using a group-based polarization score motivated by \citet{mehlhaff2024group}. In each year $t$, every state is classified into one of three party-status categories---DD (both senators Democratic), RR (both Republican), or Mixed---and every state pair $(i,j)$ is correspondingly assigned to a pair-type group $g_t(i,j)$ based on the statuses of $i$ and $j$. Because senatorial composition changes through elections, the partition $g_t$ used to compute $R^{2}(t)$ is allowed to vary with $t$. Letting $\bar{P}^{(t)} = \binom{n}{2}^{-1} \sum_{i<j} \hat{P}^{(t)}_{ij}$ denote the global mean across pairs and $\bar{P}^{(t)}_{g}$ the within-group mean for pair-type group $g$, we set
\begin{equation*}
\mathrm{SS}_{\mathrm{total}}(t) = \sum_{i<j}\bigl(\hat{P}^{(t)}_{ij} - \bar{P}^{(t)}\bigr)^{2}, \qquad
\mathrm{SS}_{\mathrm{between}}(t) = \sum_{i<j}\bigl(\bar{P}^{(t)}_{g_t(i,j)} - \bar{P}^{(t)}\bigr)^{2},
\end{equation*}
and define
\begin{equation*}
R^{2}(t) \;=\; \frac{\mathrm{SS}_{\mathrm{between}}(t)}{\mathrm{SS}_{\mathrm{total}}(t)}.
\end{equation*}
A larger $R^{2}(t)$ indicates that pairwise cosponsorship is more strongly segregated along party-composition lines, which we interpret as a year-specific measure of party-based polarization in the network.

As shown in Figure~\ref{fig:polarization}, the polarization score $R^{2}(t)$ exhibits two prominent rise-and-peak cycles aligned with the presidential calendar. In the first cycle, $R^{2}(t)$ rises steadily through the eight years of the Clinton administration (1993--2000) and reaches a peak of about $42\%$ under George W.\ Bush in 2003. In the second cycle, the score rises sharply through the Obama years (2009--2016) and attains its highest value of roughly $64\%$ under Trump in 2018. The recurrence of this pattern---steady accumulation under a Democratic White House followed by a peak partway into the subsequent Republican presidency---suggests that party-based separation in cosponsorship is most strongly amplified after a Democratic-to-Republican turnover.

\section{Discussion} \label{sec:discussion}
In this paper, we introduced a nonparametric multi-stage smoothing estimator for time-varying networks and established finite-sample error bounds under mild temporal and node-level structural assumptions. Our theoretical analysis precisely characterizes the dependence of the estimation error on the network size $n$, the number of observation time points $m$, and the sparsity factor $\rho_n$, yielding both pointwise and uniform convergence guarantees. Empirically, the proposed estimator demonstrates highly competitive performance across a diverse set of generative models, consistently outperforming several existing spectral and smoothing-based alternatives. 

A primary advantage of our method lies in its structural generality: the time-varying graphon framework ensures broad applicability to a wide range of complex network topologies, effectively bypassing the restrictive parametric constraints of competing benchmark methods. However, a notable limitation of the current framework is the strict assumption of continuous temporal smoothness. Accommodating smoothly evolving temporal trajectories interrupted by occasional abrupt structural change points represents a natural and practically relevant generalization for future research. Furthermore, another promising extension would be to replace the node-domain neighborhood smoothing operator with more general aggregation approaches, such as the ensemble and mixing strategies proposed by \cite{ghasemian2020stacking} and \cite{li2024network}. Integrating these principled aggregation methods could potentially render the estimation procedure even more robust and adaptive, particularly in the notoriously challenging regime of sparse networks.

\bibliographystyle{abbrvnat}
\bibliography{ref}{}

\begin{appendix}
\section{Appendix: Proof}
\subsection{Proof of Theorem~\ref{thm:double-smooth}}\label{appenix:proof_twostage}
\label{sec:appendix}
In this appendix, we collect the assumptions, lemmas, and their proofs, as well as the proof of Theorem 1. For each pair of nodes $(i,j)$, recall that in Step 1 (see \eqref{eq:local-poly}) we estimate the edge probability over time using a local polynomial order, $\ell$. In particular,
\begin{align*}
\tilde P^{(t)}_{ij}
=
\sum_{k=1}^m
w_k(t;h_1)\,
A^{(t_k)}_{ij},
\end{align*}
where the weights $w_k(t;h_1)$ are the equivalent kernel weights of the local polynomial order $\ell$. These weights admit the representation
\begin{align*}
w_k(t;h_1) = \frac{1}{m h_1}
U^\top(0)\,
B_{mt}^{-1}\,
U\!\left(\frac{t_k - t}{h_1}\right)
K\!\left(\frac{t_k - t}{h_1}\right),
\end{align*}
where $U(u) = (1,u,\ldots,u^\ell)^\top,$
and
$B_{mt}=\frac{1}{m h_1}
\sum_{k=1}^m
U\!\left(\frac{t_k - t}{h_1}\right)
U\!\left(\frac{t_k - t}{h_1}\right)^\top
K\!\left(\frac{t_k - t}{h_1}\right).
$
To guarantee the stability of the local polynomial estimator, we impose the following regularity conditions.

\begin{lemma}[Proposition~1.12 (p.~36) and Lemmma~1.3 (p.~38) in {\normalfont\citealp{Tsybakov}}]\label{lem:weights}
Under Assumption 3, for all \(m \ge m_0\), \(h \ge \tfrac{1}{2m}\), and \(t \in [0,1]\), the local polynomial weights \(w_k(t; h)\) satisfy:
\begin{enumerate}
    \item \(\displaystyle \sup_{m,t} |w_k(t;h)| \;\le\; \frac{B_0}{m\,h},\)
    \item \(\displaystyle \sum_{k=1}^m |w_k(t;h)| \;\le\; B_0,\)
    \item \(w_k(t;h) = 0\) if \(\lvert t_k - t\rvert > h,\)
    \item $\sum_{k=1}^m w_k(t;h) = 1$
    \item $\sum_{k=1}^m (t_k - t)^p \ w_k(t;h) = 0$ for $p = 1,2,...,\ell$
\end{enumerate}
where \(B_0 = \max\bigl\{2K_{\max}/\lambda_0,\;4K_{\max}\,a_0/\lambda_0\bigr\}\).
\end{lemma}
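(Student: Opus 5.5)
We take the explicit representation of the equivalent kernel weights displayed just before the lemma as the starting point. Write $z_k=(t_k-t)/h$ and $e_1=U(0)=(1,0,\ldots,0)^\top$, so that
\begin{equation*}
w_k(t;h)=\frac{1}{mh}\,e_1^\top B_{mt}^{-1}U(z_k)K(z_k).
\end{equation*}
Every property will be read off from this formula. Items 3--5 use only its algebraic structure. Items 1--2 combine it with the eigenvalue bound and the design-density bound of Assumption~\ref{assump:LP}.

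\textbf{Support (item 3).} Since $K$ is supported in $[-1,1]$ by Assumption~\ref{assump:LP}(3), we have $K(z_k)=0$ whenever $|t_k-t|>h$. Hence $w_k(t;h)=0$ for all such $k$.

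\textbf{Polynomial reproduction (items 4--5).} First multiply the weights by $U(z_k)^\top$ and sum over $k$:
\begin{equation*}
\sum_k w_k(t;h)U(z_k)^\top=e_1^\top B_{mt}^{-1}\Bigl(\frac{1}{mh}\sum_k U(z_k)U(z_k)^\top K(z_k)\Bigr)=e_1^\top B_{mt}^{-1}B_{mt}=e_1^\top .
\end{equation*}
This step is legitimate because $B_{mt}$ is invertible, which follows from $\lambda_{\min}(B_{mt})\ge\lambda_0>0$ for $m\ge m_0$. Reading the identity coordinatewise gives $\sum_k w_k(t;h)z_k^p=\mathbf 1\{p=0\}$ for $0\le p\le\ell$. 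The case $p=0$ is item 4. For $1\le p\le\ell$, multiplying by $h^p$ turns $\sum_k w_k z_k^p=0$ into $\sum_k (t_k-t)^p w_k(t;h)=0$, which is item 5.

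\textbf{Magnitude bounds (items 1--2).} Since $B_{mt}$ is symmetric with smallest eigenvalue at least $\lambda_0$, we have $\|B_{mt}^{-1}\|_{op}\le 1/\lambda_0$, and $\|e_1\|=1$. By Cauchy--Schwarz,
\begin{equation*}
|w_k(t;h)|\le\frac{1}{mh\lambda_0}\,\|U(z_k)\|\,|K(z_k)|.
\end{equation*}
The right-hand side vanishes unless $|z_k|\le1$, and on that set $\|U(z_k)\|\le\sqrt{\ell+1}$ and $|K(z_k)|\le K_{\max}$. This gives item 1, $|w_k(t;h)|\le C_\ell K_{\max}/(\lambda_0 mh)$ with $C_\ell=\sqrt{\ell+1}$, uniformly in $m$ and $t$.

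For item 2, multiply this bound by the number of $k$ with $|t_k-t|\le h$. Applying Assumption~\ref{assump:LP}(2) to the interval $I=[t-h,t+h]\cap[0,1]$ bounds that count by $m a_0\max(2h,1/m)$. Under $h\ge 1/(2m)$ we have $1/m\le 2h$, so the count is at most $2a_0mh$. Therefore $\sum_k|w_k(t;h)|\le 2C_\ell K_{\max}a_0/\lambda_0$, and the factor $mh$ cancels exactly.

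\textbf{Main obstacle.} The only delicate point is matching the stated constant $B_0=\max\{2K_{\max}/\lambda_0,\,4K_{\max}a_0/\lambda_0\}$. The direct argument above produces an extra factor $\sqrt{\ell+1}$ from $\|U(z)\|$ on $|z|\le1$. I would first try to follow Tsybakov's normalization of $U$ to see whether his setup bounds $\|U(z)\|$ by $2$. If it does not, the cleanest resolution is to absorb the factor $\sqrt{\ell+1}$ into $B_0$. This is harmless for all later uses, since there $\ell=\lfloor\beta\rfloor$ is fixed and $B_0$ enters only as a constant depending on the kernel and $\ell$.
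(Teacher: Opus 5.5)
Your proof is correct and is essentially the argument behind the results the paper cites rather than proves. Tsybakov's Lemma~1.3 uses exactly your chain $\|B_{mt}^{-1}\|\le 1/\lambda_0$, $|K(z)|\le K_{\max}\mathbf{1}\{|z|\le 1\}$, and the (LP2) count $\#\{k:|t_k-t|\le h\}\le 2a_0mh$. Items 4--5 are his polynomial-reproduction Proposition~1.12, and your identity $\sum_k w_k(t;h)U(z_k)^\top=e_1^\top$ is the matrix form of that fact.

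Your concern about the constant resolves as you suspected. Tsybakov normalizes $U(u)=(1,u,u^2/2!,\ldots,u^\ell/\ell!)^\top$, so $\|U(u)\|^2\le\sum_{j\ge 0}(j!)^{-2}<4$ on $|u|\le 1$, and your bounds then give exactly $B_0$. A diagonal rescaling of $U$ that fixes the first coordinate leaves every $w_k(t;h)$ unchanged and only changes which matrix Assumption~\ref{assump:LP}(1) constrains. The paper's factorial-free $U$ therefore costs at most an $\ell$-dependent constant, which, as you note, is harmless downstream.
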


\bigskip

\begin{lemma}
    Under Assumptions 1 and 3, for any $i,j$ such that $i \neq j$ and $t \in [0,1]$,
    \begin{align}
    \label{lem2:bias}
    &|{\rm E} \Tilde P_{ij}^{(t)} - P_{ij}^{(t)}|=|b_{ij}(t)| \leq B_1 \rho_n h_1^\beta\\
    \label{lem2:var}
    &{\rm Var}(\Tilde P_{ij}^{(t)}) = \sigma^2_{ij}(t) \leq  B_2\rho_n(mh_1)^{-1}
    \end{align}
\end{lemma}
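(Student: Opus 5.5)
Both bounds follow from Lemma~\ref{lem:weights} combined with the standard Taylor-expansion argument for local polynomial estimators. Throughout we condition on the latent positions $\xi_1,\ldots,\xi_n$. Then $P^{(t)}_{ij}=\rho_n f(\xi_i,\xi_j,t)$ is deterministic, and the $A^{(t_k)}_{ij}$ are independent across $k$ with ${\rm E}A^{(t_k)}_{ij}=P^{(t_k)}_{ij}$. By linearity, ${\rm E}\tilde P^{(t)}_{ij}=\sum_k w_k(t;h_1)P^{(t_k)}_{ij}$.

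\emph{Bias.} Fix $(i,j)$ and write $g(s)=f(\xi_i,\xi_j,s)$. By Assumption~\ref{assump:smooth}, Taylor's theorem with the intermediate-point form of the remainder gives
\[
g(t_k)=\sum_{p=0}^{\ell}\frac{g^{(p)}(t)}{p!}(t_k-t)^p+\frac{g^{(\ell)}(\tau_k)-g^{(\ell)}(t)}{\ell!}(t_k-t)^\ell
\]
for some $\tau_k$ between $t$ and $t_k$. We multiply by $w_k(t;h_1)$ and sum over $k$. Properties 4 and 5 of Lemma~\ref{lem:weights} reproduce $g(t)$ and kill the terms with $p=1,\ldots,\ell$, leaving
\[
b_{ij}(t)=\rho_n\sum_k w_k(t;h_1)\frac{g^{(\ell)}(\tau_k)-g^{(\ell)}(t)}{\ell!}(t_k-t)^\ell .
\]
By property 3, only indices with $|t_k-t|\le h_1$ contribute. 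For those, the Hölder condition gives $|g^{(\ell)}(\tau_k)-g^{(\ell)}(t)|\le L_1|t_k-t|^{\beta-\ell}\le L_1h_1^{\beta-\ell}$. Using property 2 to bound $\sum_k|w_k|\le B_0$, we obtain $|b_{ij}(t)|\le \rho_n L_1B_0h_1^\beta/\ell!$, so $B_1=L_1B_0/\ell!$. The constant depends only on the graphon class and the kernel, since $L_1$ is uniform over $(\xi_i,\xi_j)$.

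\emph{Variance.} By independence across $k$,
\[
\sigma^2_{ij}(t)=\sum_k w_k(t;h_1)^2P^{(t_k)}_{ij}(1-P^{(t_k)}_{ij})\le \rho_n\sum_k w_k^2 ,
\]
because $f\le 1$. Properties 1 and 2 give $\sum_k w_k^2\le \max_k|w_k|\sum_k|w_k|\le B_0^2/(mh_1)$. Hence $\sigma^2_{ij}(t)\le B_0^2\rho_n(mh_1)^{-1}$, so $B_2=B_0^2$.

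\emph{Hypotheses and obstacles.} The only requirement is that Lemma~\ref{lem:weights} applies, namely $m\ge m_0$ and $h_1\ge 1/(2m)$. We assume this, as is implicit for the chosen bandwidth $h_1^\star$ when $n,m$ are large. The one point needing care is the Taylor step when $\beta$ is an integer (so that $\ell=\beta$). Under the paper's convention $\ell=\lfloor\beta\rfloor$, this case is handled identically with exponent $\beta-\ell=0$. With that convention there is no substantive obstacle, and the argument is routine.
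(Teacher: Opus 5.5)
Your proposal is correct and follows essentially the same argument as the paper. The polynomial terms of the order-$\ell$ Taylor expansion are annihilated by properties 4--5 of Lemma~\ref{lem:weights}, and the H\"older remainder combined with localization and absolute summability gives $B_1=L_1B_0/\ell!$. The bound $\sum_k w_k^2\le \max_k|w_k|\sum_k|w_k|$ together with $P(1-P)\le\rho_n$ gives $B_2=B_0^2$. Your explicit conditioning on the $\xi_i$ and your note that Lemma~\ref{lem:weights} requires $m\ge m_0$ and $h_1\ge 1/(2m)$ are sensible additions that the paper leaves implicit.
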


\begin{proof}
Firstly, the bias of a kernel estimator, (\ref{lem2:bias}), is bounded by using Taylor expansion.
\begin{equation}
\label{eq:bias1}
\begin{aligned}
    |b_{ij}(t_0)| &= |{\rm E} \Tilde P_{ij}^{(t_0)} - P_{ij}^{(t_0)}| = \Big|\sum_{k=1}^{m}w_k(t_0)\big( 
    P_{ij}^{(t_k)} - P_{ij}^{(t_0)}
    \big)\Big|\\[8pt]
    &\leq \Big|\sum_{k=1}^m w_k(t_0) \ \rho_n \ \frac{f^{(\ell)}\big(t_0 +\tau_0(t_k-t_0)\big) -f^{(\ell)}(t_0)}{l!} (t_k-t_0)^{\ell}\Big|\\[8pt]
    &\leq  \sum_{k=1}^m  |w_k(t_0)| \ \rho_n \ \frac{\ L_1|t_k-t_0|^{\beta}}{\ell !}\\[8pt]
    &=  \sum_{k=1}^m  \frac{\  \rho_n L_1|t_k-t_0|^{\beta}}{\ell!}|w_k(t_0)| I(|t_k-t_0| \leq h_1)\\[8pt]
    &\leq \sum_{k=1}^{m}\frac{\rho_n L_1\ h_1^{\beta}}{\ell !}|w_k(t_0)| \leq B_1 \rho_n h_1^{\beta}
\end{aligned}
\end{equation}
where $0<\tau_0<1$ and $B_1 = L_1B_0/\ell!$. From the first line to the second line, we used the fourth and fifth results in Lemma~\ref{lem:weights}. 
Next, the transition from the second line to the third line used Assumption~\ref{assump:smooth}.
Finally, the last line follows from the second and third results in Lemma~\ref{lem:weights}.

For the variance part, (\ref{lem2:var}), 
\begin{align*}
    {\rm Var}(\Tilde P_{ij}^{(t_0)}) = \sigma^2_{ij}(t_0)&= {\rm E}\Big[\sum_{k=1}^m w_k(t_0)\big(A_{ij}^{(t_k)} - P_{ij}^{(t_k)}\big)\Big]^2\\
    &= \sum_{k=1}^m w_k^2(t_0) {\rm Var}(A_{ij}^{({t_k})})
    \\&\leq \rho_n \sup_{m,t}|w_k(t;h)| \sum_{k=1}^m |w_k(t;h)|\\
     &\leq B_2 \rho_n / mh_1
\end{align*}
where $B_2 = B_0^2$.
From the Definition 1, we have
\({\rm Var}\!\bigl(A_{ij}^{(t)}\bigr)
= P_{ij}^{(t)} \bigl(1 - P_{ij}^{(t)}\bigr)
= \rho_n \, f\bigl(\xi_i, \xi_j, t\bigr) 
\bigl(1 - \rho_n \, f\bigl(\xi_i, \xi_j, t\bigr)\bigr) \leq \rho_n.\) We used the first and the second results in Lemma~1 for the transition to the last inequality. This completes the proof.
\end{proof}

Denote $I_j = [x_{j-1}, x_j)$ for $1 \leq j < J-1$ and $I_J =[x_{J-1},1]$ for the intervals in Assumption~\ref{assump:lipschitz}, and denote $\delta = \min_{1\leq j \leq J} |I_j|$. For any $\xi \in [0,1]$, let $I(\xi)$ denote the $I_j$ that contains $\xi$. Let $S_i(\Delta) = [\xi_i - \Delta, \xi_i + \Delta] \cap I(\xi_i)$ denote the ball of $\xi_i$ in which $f(x,y,t)$ is Lipschitz in $x \in S_i(\Delta)$ for any fixed $y$.

\begin{lemma} \label{lem3}
For arbitrary global constants $C_1, C_2 > 0$, define \(\Delta_{n,m} = (C_1+C_2)\sqrt{\tfrac{\log nm}{nmh_1}}\).
If $n,m$ are large enough so that \(\sqrt{\tfrac{n}{mh_1\log nm}} > \tfrac{6\,C_1 + 7\,C_2}{C_2^2}(1 + \gamma)\) for some $\gamma > 0$ and $\Delta_{n,m} \,<\, \delta/2$, then there exists $\Tilde C_1 > 0$ such that
\[
    {\rm Pr}\Biggl( 
        \min_{i} \Bigl|\frac{|\{\;i' \neq i : \xi_{i'} \in S_i(\Delta_{n,m})\}|}{n-1}\Bigr|
        \;\ge\; 
        C_1\sqrt{\frac{\log nm}{nmh_1}}
    \Biggr)
    \;\;\ge\;\; 
    1 \;-\; 2\,(nm)^{-(\tilde C_1 + \gamma)}.
\]

\begin{proof}
To control the neighborhood size $|S_i(\Delta_{n,m})|$, note that the latent positions $\xi_i$ lie in disjoint intervals $I_k$. Since $\Delta_{n,m} < \delta/2$, for any $i$ we have that either 
$[\xi_i - \Delta_{n,m},\, \xi_i] \subset I(\xi_i)$ or $[\xi_i,\, \xi_i + \Delta_{n,m}] \subset I(\xi_i)$. 
If this were not the case, then the interval $I(\xi_i)$ would be contained in 
$[\xi_i - \Delta_{n,m},\, \xi_i + \Delta_{n,m}]$, implying $|I(\xi_i)| \le 2\Delta_{n,m}$, 
which contradicts the condition $\Delta_{n,m} < \delta$. Therefore, for all $i$, $\Delta_{n,m} \;\le\; |S_i(\Delta_{n,m})| \;\le\; 2\Delta_{n,m}$. 

\noindent
By the Bernstein inequality, for each $i$ we have
\[
    {\rm Pr}\!\Bigl( \Bigl|\frac{|\{\;i'\neq i : \xi_{i'} \in S_i(\Delta_{n,m})\}|}{n-1} \;-\; |S_i(\Delta_{n,m})|\Bigr|\;\ge\; 
        \epsilon\Bigr)
    \;\;\le\;\;
    2\,\exp\!\Bigl(-\,\frac{\,n\,\epsilon^2/2}{
            2\,\Delta_{n,m} \;+\; \,\epsilon/3
        }
    \Bigr).
\]
Since 
\(\Delta_{n,m} = (C_1+C_2)\sqrt{\frac{\log nm}{nmh_1}}\)
and setting
\(\epsilon = C_2  \sqrt{\frac{\log nm}{nmh_1}}\)
substituting these into the exponent yields
\[
    2\exp \Biggl(-\,\frac{\tfrac{1}{3}\,n\,\epsilon^2}{2\,\Delta_{n,m} + \tfrac{1}{3}\,\epsilon}\Biggr)
    \;\;\le\;\; 
    2\exp\Biggl(-\,\frac{\,C_2^2}{6\,C_1 + 7\,C_2} \frac{\sqrt{n}\log nm}{\sqrt{mh_1 \log nm}}\Biggr).
\]
Hence, a union bound over $i=1,\dots,n$ gives
\begin{equation}
\begin{aligned}
    {\rm Pr}\!\Bigl(
        \max_i 
        \Bigl|\frac{|\{\;i'\neq i : \xi_{i'} \in S_i(\Delta_{n,m})\}|}{n-1} &- |S_i(\Delta_{n,m})|\Bigr|
    \;\ge\; 
    \epsilon
    \Bigr)\\
    &\;\;\le\;\;
    2\,n\,\exp\Biggl(-\,\frac{\,C_2^2}{6\,C_1 + 7\,C_2}  \,\frac{\sqrt{n}\log nm}{\sqrt{mh_1 \log nm}}\Biggr).
\end{aligned}
\end{equation}
Since for $n$ and $m$ large enough so that
\(\sqrt{\tfrac{n}{mh_1 \log nm}} > \tfrac{6\,C_1 + 7\,C_2}{C_2^2}(1 + \gamma),\)
it follows that there exists $\Tilde C_1 >0$ such that
\[
    2\,n^{\,1 - \tfrac{C_2^2}{6\,C_1 + 7\,C_2}\,\sqrt{\tfrac{n}{mh_1\log nm}}}
    \;\;\le\;\; 
    2\,(nm)^{-(\tilde C_1 + \gamma)}.
\]
Thus, with probability at least 
\(
1 - 2(nm)^{-(\tilde C_1 + \gamma)},
\)
we have
\begin{equation*}
\begin{aligned}
    \min_i \frac{|\{\;i' \neq i : \xi_{i'} \in S_i(\Delta_{n,m})\}|}{n-1} \;\;\ge\;\; \min_i |S_i(\Delta_{n,m})| \;-\;
    \epsilon &\;\ge\; \Delta_{n,m} \;-\; C_2 \,\sqrt{\frac{\log nm}{nmh_1}} \\[3pt]
    &\;=\; C_1 \,\sqrt{\frac{\log nm}{nmh_1}}
\end{aligned}
\end{equation*}
as claimed.
\end{proof}
\end{lemma}

\begin{lemma} \label{lem4}
Let $C_1, C_2>0$ be the global constants from Lemma~\ref{lem3}, and let $C_0$ be any constant satisfying $0 \le C_0 \le C_1$. Define  $h_2 = C_0\, \sqrt{\tfrac{\log nm}{nmh_1}},$
and let $\N^{(t)}_i$ denote the neighborhood obtained by thresholding at the lower $h_2$-quantile of $\{\tilde d_t(i,k)\}$ as in~\eqref{eq:nbhd}. Choose a constant $C_3>0$ such that $C_3^2 > 2+\gamma$. Then, for all sufficiently large $n,m$ satisfying the conditions of Lemma~\ref{lem3} and $\rho_n > \sqrt{\tfrac{mh_1 \log nm}{n}}$, there exists $\Tilde C_2 > 0$ such that 
\begin{enumerate}
    \item 
    $
        |\mathcal{N}^{(t)}_i| 
        \;\ge\; 
        C_0\,\sqrt{\frac{n\log nm}{mh_1}}
    $
    for some positive constant $C_0$,
    \item 
    for any $i$ and  any $i' \in \N^{(t)}_i$,
    $
        \|P^{(t)}_{i'.} - P^{(t)}_{i.}\|_2^2 \,/\, n 
        \;\;\le\;\; 
        (6L_2(C_1+C_2) + 12C_3)\rho_n \sqrt{\frac{\log nm}{nmh_1}} + 24B_0B_1\rho_n^2 h_1^{\beta}$
        \, \,  with the probability
$1 - 2\,(nm)^{-(\widetilde C_1+\gamma)} - 2\,(nm)^{-(\widetilde C_2+\gamma)}.$
\end{enumerate}
\end{lemma}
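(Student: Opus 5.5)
Part 1 is immediate. By \eqref{eq:nbhd}, $\N^{(t)}_i$ contains every index whose distance falls at or below the $h_2$-quantile of $\{\tilde d_t(i,k)\}_k$. Hence $|\N^{(t)}_i|\ge h_2 n = C_0\sqrt{n\log nm/(mh_1)}$, up to rounding.

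Part 2 follows the argument of \citet{zhang2017estimating}, with the single adjacency matrix replaced by the temporally smoothed $\tilde P^{(t)}$. The first step is a uniform concentration bound for inner products. Write $\tilde P^{(t)} = P^{(t)} + b(t) + E$, where $b_{ij}(t)$ is the bias from \eqref{lem2:bias} and $E_{ij}=\sum_k w_k(t;h_1)(A^{(t_k)}_{ij}-P^{(t_k)}_{ij})$. Each $E_{ij}$ is a weighted sum of independent centered Bernoullis. By Lemma~\ref{lem:weights} each summand is bounded by $B_0/(mh_1)$, and by \eqref{lem2:var} the variance is at most $B_2\rho_n/(mh_1)$. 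Bernstein's inequality, applied to $n^{-1}\sum_{l} E_{il}P_{lk}^{(t)}$ and similar bilinear terms, together with a union bound over the $O(n^2)$ pairs, should give the following with probability $1-2(nm)^{-(\tilde C_2+\gamma)}$:
\[
\max_{i\neq k}\Bigl|\tfrac1n\langle \tilde P^{(t)}_{i\cdot},\tilde P^{(t)}_{k\cdot}\rangle-\tfrac1n\langle P^{(t)}_{i\cdot},P^{(t)}_{k\cdot}\rangle\Bigr|\le C_3\rho_n\sqrt{\tfrac{\log nm}{nmh_1}}+c\,B_0B_1\rho_n^2h_1^\beta .
\]
The condition $\rho_n>\sqrt{mh_1\log nm/n}$ keeps the Bernstein variance term dominant over the range term, so the threshold $C_3^2>2+\gamma$ suffices. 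The $h_1^\beta$ piece comes from bounding $|\langle b_{i\cdot},P_{k\cdot}\rangle|/n\le B_1\rho_n h_1^\beta\cdot\rho_n$, and similarly for the cross terms, using Lemma~\ref{lem:weights}(2).

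The second step compares the empirical distance with the population distance $d(i,i')^2=\max_l|\langle P_{i\cdot}-P_{i'\cdot},P_{l\cdot}\rangle|/n$. On the event of the first step, $|\tilde d_t(i,i')^2-d(i,i')^2|\le 2\,(\text{that bound})$ uniformly.

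In the third step, on the event of Lemma~\ref{lem3}, at least $C_1\sqrt{\log nm/(nmh_1)}\,n\ge h_2 n$ nodes $i'$ have $\xi_{i'}\in S_i(\Delta_{n,m})$. For these nodes Assumption~\ref{assump:lipschitz} gives $d(i,i')^2\le L_2\Delta_{n,m}\rho_n\cdot\rho_n\le L_2(C_1+C_2)\rho_n\sqrt{\log nm/(nmh_1)}$, using $\rho_n\le1$. Their empirical distances are therefore small. Because the neighborhood is the lowest $h_2$-quantile, every $i'\in\N_i^{(t)}$ has $\tilde d_t(i,i')^2$ at most the same bound plus the concentration error, and hence $d(i,i')^2\lesssim$ the Lipschitz bound plus twice the concentration error.

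The final step converts this to the row distance, as in Zhang et al. Write
\[
\|P_{i\cdot}-P_{i'\cdot}\|^2/n=\langle P_{i\cdot}-P_{i'\cdot},P_{i\cdot}\rangle/n-\langle P_{i\cdot}-P_{i'\cdot},P_{i'\cdot}\rangle/n .
\]
Replace $i$ and $i'$ by nodes $\tilde i\in S_i(\Delta)$ and $\tilde i'\in S_{i'}(\Delta)$, which exist by Lemma~\ref{lem3}; this costs Lipschitz errors $L_2\Delta\rho_n$. Each resulting term is then bounded by $d(i,i')^2$. Collecting constants should give $6L_2(C_1+C_2)+12C_3$ and $24B_0B_1$. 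The overall probability is the union of the event of Lemma~\ref{lem3} and the concentration event.

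The main obstacle is the concentration step. The entries $E_{il}$ and $E_{kl}$ are not independent across the product $\langle E_{i\cdot},E_{k\cdot}\rangle$ when $l$ ranges over rows, and symmetry creates coincidences such as $l=i$ or $l=k$. I would handle this by excluding the diagonal terms, which are $O(1/n)$, and applying a conditional Bernstein argument: condition on row $k$ and treat the row-$i$ noise as a sum of independent terms whose weights are bounded using Lemma~\ref{lem:weights}. It must also be checked that the second-order term $\langle E,E\rangle/n$ is of order $\rho_n/(mh_1)$, which is dominated by the stated rate.
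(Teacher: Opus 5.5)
\paragraph{Same overall route, one incorrect step.}
Your route is the paper's, i.e.\ the \citet{zhang2017estimating} argument. It has four ingredients: uniform concentration of off-diagonal inner products of $\tilde P^{(t)}$, Lipschitz proxies from Lemma~\ref{lem3}, the quantile argument for $\N_i^{(t)}$, and proxy replacement in $\|P^{(t)}_{i\cdot}-P^{(t)}_{i'\cdot}\|_2^2/n$. You correctly take $\tilde i'\in S_{i'}(\Delta_{n,m})$, where the paper writes $S_i$. The paper does not split $\tilde P^{(t)}=P^{(t)}+b+E$. Instead it applies Bernstein directly to $\sum_{l\neq i,j}\bigl(\tilde P^{(t)}_{il}\tilde P^{(t)}_{lj}-\mathrm{E}\,\tilde P^{(t)}_{il}\tilde P^{(t)}_{lj}\bigr)$ and bounds the bias separately; either organization works. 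You should define your population distance with $l\neq i,i'$, as in $\tilde d_t$, so that your second step is a valid comparison. The proxies $\tilde i,\tilde i'$ must then be chosen outside $\{i,i'\}$, which Lemma~\ref{lem3} permits.

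\paragraph{The quadratic term.}
The step that fails as written is your treatment of $\frac1n\langle E_{i\cdot},E_{k\cdot}\rangle$ for $i\neq k$.

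First, the dependence you worry about is not there. For $l\notin\{i,k\}$ the edges $(i,l)$ and $(k,l)$ are distinct, and different $l$ use disjoint pairs of edges. So the summands $E_{il}E_{kl}$ are independent across $l$ and each has mean zero, and no conditioning is needed. This is exactly the independence the paper uses.

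Second, and more seriously, the size you assign to this term is wrong, and that size would break the lemma. The order $\rho_n/(mh_1)$ belongs to the same-row quantity $\|E_{i\cdot}\|_2^2/n$, which never enters $\tilde d_t$. It is also \emph{not} dominated by the target rate, since
\[
\frac{\rho_n/(mh_1)}{\rho_n\sqrt{\log nm/(nmh_1)}}=\sqrt{\frac{n}{mh_1\log nm}},
\]
and the hypotheses of Lemma~\ref{lem3} force this ratio to be large. So any magnitude bound of that type, such as Cauchy--Schwarz on the cross term, loses the rate.

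What you need is mean-zero concentration. We have $\mathrm{Var}(E_{il}E_{kl})\le (B_2\rho_n/(mh_1))^2$ and $|E_{il}E_{kl}|\le B_0^2$. Bernstein plus a union bound then give, with high probability,
\[
\max_{i\neq k}\Bigl|\frac1n\sum_{l\notin\{i,k\}}E_{il}E_{kl}\Bigr|\lesssim \frac{\rho_n}{mh_1}\sqrt{\frac{\log nm}{n}}+\frac{\log nm}{n}.
\]
Both terms are $\lesssim\rho_n\sqrt{\log nm/(nmh_1)}$, because $mh_1\ge 1/2$ by Lemma~\ref{lem:weights} and $\rho_n>\sqrt{mh_1\log nm/n}$. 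Note that for this term the range part of Bernstein, not the variance part, is the binding one. The threshold on $C_3$ therefore has to depend on $B_0$ and $B_2$, rather than just satisfy $C_3^2>2+\gamma$.
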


\begin{proof}
To control the target inequality, we first bound the distance $\tilde d_t(i,i')$ between a node $i$ and one of its neighbors $i'$. This requires an upper bound on the difference $|(\Tilde P^{(t)^{2}}/n)_{ij} - (P^{(t)^2}/n)_{ij}|$.
\begin{align*}
    |(\Tilde P^{(t)^2}/n)_{ij} - (P^{(t)^2}/n)_{ij}| &=|\sum_l (\Tilde P^{(t)}_{il}\Tilde P^{(t)}_{lj} -  P^{(t)}_{il}P^{(t)}_{lj})|/n\\
    & \leq  \frac{|\sum_{l \neq i,j}(\Tilde P^{(t)}_{il}\Tilde P^{(t)}_{lj} -  P^{(t)}_{il}P^{(t)}_{lj})|}{n-2} \frac{n-2}{n}\\[8pt]
    &\quad \quad+ \frac{|(\Tilde P^{(t)}_{ii}\Tilde P^{(t)}_{ij} - P^{(t)}_{ii}P^{(t)}_{ij})|}{n} 
    + \frac{|(\Tilde P^{(t)}_{ij}\Tilde P^{(t)}_{jj} - P^{(t)}_{ij}P^{(t)}_{jj})|}{n}
\end{align*}
\noindent
The first term of the inequality can be divided into variance and bias terms.
\begin{align*}
   \frac{|\sum_{l \neq i,j}(\Tilde P^{(t)}_{il}\Tilde P^{(t)}_{lj} -  P^{(t)}_{il}P^{(t)}_{lj})|}{n-2} \leq
    \frac{|\sum_{l \neq i,j}(\Tilde P^{(t)}_{il}\Tilde P^{(t)}_{lj} -  {\rm E} \Tilde P^{(t)}_{il} \Tilde P^{(t)}_{lj})|}{n-2} + \frac{|\sum_{l \neq i,j}( {\rm E} \Tilde P^{(t)}_{il}\Tilde P_{lj}^{(t)} -  P^{(t)}_{il}P^{(t)}_{lj})|}{n-2}
\end{align*}
Since the kernel estimators, $\Tilde P^{(t)}_{il}$ and $\Tilde P^{(t)}_{lj}$, are independent when $i \neq j$, and by the fact ${\rm E}\Tilde P^{(t)}_{il} \,=\, \sum_{k=1}^m w_k(t)P_{il}^{(t_k)}  \,\leq\, B_0\rho_n$, and ${\rm Var}(\Tilde P_{il}) \leq B_2\rho_n / mh_1$ by (\ref{lem2:var}),
for any $l \neq i,j$,
\begin{equation} \label{eq:var1}
\begin{aligned}
{\rm Var}(\Tilde P^{(t)}_{il}  \Tilde P^{(t)}_{lj} ) &=
{\rm E}[{\rm Var}(\Tilde P^{(t)}_{il}\Tilde P^{(t)}_{lj}|\Tilde P^{(t)}_{il})] + {\rm Var}[{\rm E}(\Tilde P^{(t)}_{il}\Tilde P^{(t)}_{lj}|\Tilde P^{(t)}_{il})]
\\
&= {\rm E}(\Tilde P^{(t)^2}_{il})\,{\rm Var}(\Tilde P^{(t)}_{lj}) + {\rm E}(\Tilde P^{(t)}_{lj})^2\,{\rm Var}(\Tilde P^{(t)}_{il})
\\
&\leq {\rm E}(\Tilde P^{(t)}_{il})\,{\rm Var}(\Tilde P^{(t)}_{lj} ) + {\rm E}(\Tilde P^{(t)}_{lj})\,{\rm Var}(\Tilde P^{(t)}_{il} )\\[8pt]
&\leq \frac{2 B_0B_2 \, \rho_n^2}{m\,h_1}.
\end{aligned}
\end{equation}
The transition from the second line to the third line uses the inequality that ${\rm E}(\Tilde P_{il}^{(t)^2}) \leq {\rm E}(\Tilde P_{il}^{(t)})$ and ${\rm E}(\Tilde P_{lj}^{(t)})^{2} \leq {\rm E}(\Tilde P_{lj}^{(t)})$ which follows from the fact $0 \leq \Tilde P_{ij}^{(t)} \leq 1$ for any $i,j$.

\noindent
With the variance bound, (\ref{eq:var1}), Bernstein inequality yields as follows:
\begin{align*}
{\rm Pr}\Bigg(  \frac{|\sum_{l \neq i,j} (\Tilde P^{(t)}_{il}\Tilde P^{(t)}_{lj} -  {\rm E}\Tilde P^{(t)}_{il} \Tilde P^{(t)}_{lj})|}{n-2}  \geq \epsilon\Bigg) &\leq 2\exp\Bigg( \frac{-\frac{1}{3}n\epsilon^2}{2B_0B_2\, \rho_n^2 \,m^{-1}h_1^{-1} +\frac{1}{3}\epsilon}\Bigg)\\[11pt]
&\leq 
2\exp\left( - \frac{1}{2} n\epsilon \right) \vee 2\exp\Bigg(  -\frac{1}{12B_0B_2 \, \rho_n^2}n\epsilon^2 mh_1\Bigg)
\end{align*}
Taking a union bound over all $i\neq j$, we have
\begin{align*}
    {\rm Pr}\Bigg( \max_{i,j: i\neq j}   &\frac{|\sum_{l \neq i,j}(\Tilde P^{(t)}_{il}\Tilde P^{(t)}_{lj} -  {\rm E} \Tilde P^{(t)}_{il} \Tilde P^{(t)}_{lj})|}{n-2}  \geq  \ \ \epsilon \ 
    \Bigg)\\ 
&\leq 2n^2\exp\left( -\frac{1}{2}n\epsilon \right) \vee 2n^2\exp\Bigg(  -\frac{1}{12B_0B_2 \, \rho_n^2}n\epsilon^2 mh_1\Bigg)
\end{align*}
To make each exponential terms at most $O\bigl((nm)^{-C}\bigr)$ for some constant $C > 0$, it suffices to enforce exponents dominate $\log nm$:
\begin{equation}\label{eq:epsilon}
\begin{aligned}
\epsilon \,\gtrsim\, \quad \frac{\log nm }{n}  \quad \vee \quad \rho_n\sqrt{\frac{\log nm}{nmh_1}}
\end{aligned}
\end{equation}
Under the condition of Lemma~\ref{lem3}, namely $\rho_n > \sqrt{\tfrac{mh_1\log nm}{n}}$ for large enough $n,m$, we set $\epsilon = C_3\rho_n \sqrt{\frac{\log nm}{nmh_1}}$ for $C_3 > 0$ satisfying $C_3/2B_0B_2 > 2+\gamma$. Then there exists $\Tilde C_2 > 0 $ such that
 \begin{align*}
    & 2n^2\exp\Bigg( -\frac{C_3}{2B_0B_2}\,  \,\log nm \Bigg) = 2(nm)^{2-\frac{C_3}{2B_0B_2}} \leq 2(nm)^{-(\Tilde C_2 +\gamma)}/3
\end{align*}
To sum up, 
\begin{align*}
    {\rm Pr}\Bigg( \max_{i,j: i\neq j}   \frac{|\sum_{l \neq i,j}(\Tilde P^{(t)}_{il}\Tilde P^{(t)}_{lj} -  {\rm E} \Tilde P^{(t)}_{il} \Tilde P^{(t)}_{lj})|}{n-2}  \geq & \ \ C_3 \rho_n \sqrt{\frac{\log nm}{nmh_1}} \ 
    \Bigg) \, \leq \, 2(nm)^{-(\Tilde C_2 + \gamma)}/3\end{align*}
For the two diagonal-related terms,
\begin{align*}
\frac{|\tilde P^{(t)}_{ii}\tilde P^{(t)}_{ij}- {\rm E}\tilde P^{(t)}_{ii}\tilde P^{(t)}_{ij}|}{n-2}
\le \frac{2}{n-2},
\qquad
\frac{|\tilde P^{(t)}_{ij}\tilde P^{(t)}_{jj}- {\rm E}\tilde P^{(t)}_{ij}\tilde P^{(t)}_{jj}|}{n-2}
\le \frac{2}{n-2}.
\end{align*}
These bounds are deterministic and decay at order \(1/n\).
Under the condition \(\rho_n > \sqrt{\tfrac{m h_1 \log(nm)}{n}}\), we have
\begin{equation}
    \frac{1}{n}
    \;<\; \frac{\log nm}{n}
    \;<\; \rho_n \sqrt{\frac{\log nm}{nm h_1}}.
\end{equation}
Therefore, each of these terms is asymptotically dominated by
\(\rho_n \sqrt{\tfrac{\log nm}{nm h_1}}\) under the assumptions of Lemma~\ref{lem3}, and is consequently negligible relative to the leading term.


\noindent
Lastly, the bias term can be bounded by (\ref{lem2:bias}), ${\rm E}\Tilde P^{(t)}_{il} \,\leq\, B_0\rho_n$, and $P^{(t)}_{il} \,\leq\, \rho_n$,
for any $l \neq i,j$,
\begin{align*}
    |{\rm E} \Tilde P^{(t)}_{il}\Tilde P^{(t)}_{lj} -  P^{(t)}_{il}P^{(t)}_{lj}| &= |{\rm E} \Tilde P^{(t)}_{il} {\rm E}\Tilde P^{(t)}_{lj} -  P^{(t)}_{il}P^{(t)}_{lj}| \\
    &\leq |{\rm E}\Tilde P^{(t)}_{il}({\rm E}\Tilde P_{lj}^{(t)} - P^{(t)}_{lj})| + |P^{(t)}_{lj}({\rm E}\Tilde P_{il}^{(t)} - P^{(t)}_{il})| \\
    & \leq B_0\rho_n(|b_{lj}(t)| + |b_{il}(t)|)\\
    & \leq 2B_0B_1\rho_n^2 h_1^{\beta}
\end{align*}
To sum up, for any $i \neq j$,
\begin{align}
\label{estiP:square}
    |(\Tilde P^{(t)^2}/n)_{ij} - (P^{(t)^2}/n)_{ij}| \, \leq \,
    C_3 \rho_n \sqrt{\frac{\log nm}{nmh_1}} + 2B_0B_1\rho^2_n h_1^{\beta}
\end{align}
with probability $1-2(nm)^{-(\Tilde C_2 +\gamma)}$.

Following the same argument as \cite{zhang2017estimating}, we have that for all $i$ and any $\Tilde i$ such that $\xi_{\Tilde i} \in S_i(\Delta_{n,m})$, 
\begin{equation}\label{trueP:squre}
\begin{aligned}
    |(P^{(t)^2}/n)_{il} - (P^{(t)^2}/n)_{\Tilde il}| &\,=\, 
    |\langle P^{(t)}_{i.}, P^{(t)}_{l.} \rangle - \langle P^{(t)}_{\Tilde i.}, P^{(t)}_{l.} \rangle|/n \\
    &\leq \|P_{i.}^{(t)} -P_{\Tilde i.}^{(t)}\|_2\|P_{l.}^{(t)}\|_{2}/n\\
    &\, \leq \, L_2\rho_n^2\Delta_{n,m}
\end{aligned}
\end{equation}
for all $l = 1,2, \dots, n$, where the last inequality follows from Assumption~\ref{assump:lipschitz} such that
\begin{align*}
    |P^{(t)}_{il} - P^{(t)}_{\Tilde il}| \, = \, \rho_n|f(\xi_{i}, \xi_{l}, t) -f(\xi_{\Tilde i}, \xi_{l}, t)| \, \leq \, L_2\rho_n |\xi_{i} - \xi_{\Tilde i}| \, \leq \, L_2\rho_n \Delta_{n,m}
\end{align*}
and from $\|P^{(t)}_{l.}\|_2 / \sqrt{n} \,\leq \, \rho_n$ for all $l$.

We bound the distance between node $i$ and its proxy $\Tilde i$, chosen so that $\xi_{\Tilde i}\in S_i(\Delta_{n,m})$. From the earlier results, we have, with probability at least $1-2(nm)^{-(\Tilde C_2+\gamma)}$,
\begin{equation}
\label{iitilde}
\begin{aligned}
    &\Tilde d_t(i, \Tilde i) = \max_{l \neq i ,i'} | (\Tilde P^{(t)^2}/n)_{il} -(\Tilde P^{(t)^2}/n)_{\Tilde il}| \\
    & =  \max_{l \neq i ,i'} | (\Tilde P^{(t)^2}/n)_{il} - (P^{(t)^2}/n)_{il} +(P^{(t)^2}/n)_{il} - (\Tilde P^{(t)^2}/n)_{\Tilde il} - (P^{(t)^2}/n)_{\Tilde il} + (P^{(t)^2}/n)_{\Tilde il}| 
    \\
    & \leq \max_{l \neq i ,i'} | (P^{(t)^2}/n)_{il} -(P^{(t)^2}/n)_{\Tilde il}| + 2\max_{i,j: i\neq j} | (\Tilde P^{(t)^2}/n)_{ij} -( P^{(t)^2}/n)_{ij}|\\
    &\leq L_2\rho_n^2\Delta_{n,m} +  2C_3\rho_n \sqrt{\frac{\log nm}{nmh_1}} + 4B_0B_1\rho_n^2 h_1^{\beta}
\end{aligned}
\end{equation}
for all $i$ and $\Tilde i$ such that $\xi_{\Tilde i} \in S_i(\Delta_{n,m})$. The transition from the third to the last line is from (\ref{estiP:square}) and (\ref{trueP:squre}).

\noindent
Recalling $\Delta_{n,m} = (C_1+C_2)\sqrt{\tfrac{\log nm}{nmh_1}}$, by Lemma~\ref{lem3} and (\ref{estiP:square}), for all $i$, at least $C_1\sqrt{\tfrac{\log nm}{nmh_1}}$ fraction of nodes $\Tilde i \neq i$ satisfy both $\xi_{\Tilde i} \in S_i(\Delta_{n,m})$ and
\begin{equation}
\label{eq:diitilde}
    \Tilde d_t(i, \Tilde i) \,\leq\,   \bigl(L_2(C_1+C_2)+2C_3\bigr)\rho_n \sqrt{\frac{\log nm}{nmh_1}} + 4B_0B_1\rho_n^2 h_1^{\beta}
\end{equation}
Since $i' \in \N_i$ have the lowest $h_2 = C_1\sqrt{\tfrac{\log nm}{nmh_1}} \,\leq\, C_0\sqrt{\tfrac{\log nm}{nmh_1}}$ fraction of $\{\Tilde d_t(i,i')\}$, (\ref{eq:diitilde}) yields that
\begin{align}
\label{ii'}
    \Tilde d_t(i, i') \,\leq\, \bigl(L_2(C_1+C_2)+2C_3\bigr)\rho_n \sqrt{\frac{\log nm}{nmh_1}} + 4B_0B_1\rho_n^2 h_1^{\beta}
\end{align}
for all $i$ and all $i' \in \N_i$ with probability $1-2(nm)^{-(\Tilde C_1 +\gamma)}-2(nm)^{-(\Tilde C_2 +\gamma)}$. 

We now continue with the proof of the final result of Lemma~\ref{lem4}. For any $i$, and  all $i' \in \N_i$, we can find $\Tilde i \in S_i(\Delta_{n,m})$ and $\Tilde i' \in S_i(\Delta_{n,m})$ such that $i,i',\Tilde i$ and $\Tilde i'$ are different each other. Then we have
\begin{equation}
\label{lem4:final}
\begin{aligned}
    \|P^{(t)}_{i'.} - P^{(t)}_{i.}\|^2_2/n 
    &\,=\, (P^{(t)^2}/n)_{ii} - (P^{(t)^2}/n)_{i'i} + (P^{(t)^2}/n)_{i'i'} - (P^{(t)^2}/n)_{ii'}
    \\
    &\,\leq\,  |(P^{(t)^2}/n)_{ii} - (P^{(t)^2}/n)_{i'i}| + |( P^{2(t)}/n)_{i'i'} - (P^{2(t)}/n)_{ii'}|\\
    & \,\leq\, |(P^{2(t)}/n)_{i\Tilde i} - (P^{2(t)}/n)_{i'\Tilde i}| + |( P^{2(t)}/n)_{i'\Tilde i'} - (P^{2(t)}/n)_{i \Tilde i'}| \\[5pt]
    &\ \ \ \ + 4L_2\rho^2_n\Delta_{n,m}\\[5pt]
    &\,\leq\, |(\Tilde P^{2(t)}/n)_{i\Tilde i} - (\Tilde P^{2(t)}/n)_{i'\Tilde i}| + |(\Tilde P^{2(t)}/n)_{i'\Tilde i'} - (\Tilde P^{2(t)}/n)_{i \Tilde i'}| \\[5pt]
    &\ \ \ \ +4L_2\rho^2_n\Delta_{n,m}  + 8C_3\rho_n\sqrt{\frac{\log nm}{nmh_1}} + 16B_0B_1\rho^2_nh_1^{\beta}\\
    &\,\leq\, 2\Tilde d(i, i') + (4L_2(C_1+C_2) + 8C_3)\rho_n\sqrt{\frac{\log nm}{nmh_1}} + 16B_1\rho_n^2 h_1^{\beta}\\[8pt]
    &\,\leq\, (6L_2(C_1+C_2) + 12C_3)\rho_n \sqrt{\frac{\log nm}{nmh_1}} + 24B_0B_1\rho_n^2 h_1^{\beta}
\end{aligned}
\end{equation}
with probability $1 - 2(nm)^{-(\widetilde C_1+\gamma)} - 2(nm)^{-(\widetilde C_2+\gamma)}$.

We choose proxy indices $\tilde i$ and $\tilde i'$ whose latent positions lie within $\Delta_{n,m}$ of that of $i$. Using the triangle inequality, we insert the intermediate terms $(P^{(t)^2}/n)_{i\tilde i}$, $(P^{(t)^2}/n)_{i'\tilde i}$, $(P^{(t)^2}/n)_{i'\tilde i'}$, and $(P^{(t)^2}/n)_{i\tilde i'}$. Whenever an index is replaced by its proxy, inequality \eqref{trueP:squre} guarantees that the resulting difference is at most $L_2 \rho_n^2 \Delta_{n,m}$. Since this replacement step is used four times in total, the combined contribution is bounded by $4 L_2 \rho_n^2 \Delta_{n,m}$.

The next step replaces these true-probability quantities with their empirical analogues through inequality \eqref{estiP:square}. Finally, inequality \eqref{ii'} rewrites the remaining differences in terms of $\tilde d(i,i')$, closing the chain of bounds and thus completes the proof.
\end{proof}


\noindent
Based on Lemma~\ref{lem3} and \ref{lem4}, we are ready to prove Theorem~\ref{thm:double-smooth}, which provides the error bound for the double-smoothing estimator.
 \begin{proof}
     To show the main result, we need to bound $\frac{1}{n}\sum_j\big( \hat P^{(t)}_{ij} - P^{(t)}_{ij}\big)^2$.
     \begin{equation}
     \begin{aligned}
                \frac{1}{n}\sum_j\bigg( \hat P^{(t)}_{ij} - P^{(t)}_{ij}\bigg)^2 &= \frac{1}{n}\sum_{j}\bigg\{
                \frac{\sum_{i'\in \mathcal{N}^{(t)}_i}(\Tilde P^{(t)}_{i'j}-\trueP)}{|\mathcal{N}^{(t)}_i|} \bigg\}^2\\
                &= 
    \frac{1}{n}\sum_{j}\bigg[
                \frac{\sum_{i'\in \mathcal{N}^{(t)}_i}(\nbhdP-\truenbhdP) + (\truenbhdP-\trueP)}{|\mathcal{N}^{(t)}_i|} \bigg]^2\\[1em]
    & \leq \frac{2}{n} \sum_{j}\bigg\{\frac{\sum_{i' \in \N^{(t)}_i}(\nbhdP - \truenbhdP)}{|\N^{(t)}_i|}\bigg\}^2 + \frac{2}{n}\sum_{j} \bigg\{\frac{\sum_{i' \in \N^{(t)}_i}(\truenbhdP - \trueP)}{|\N^{(t)}_i|}\bigg\}^2\\[1em]
    &= \frac{2}{n}\sum_{j}J_1(i,j) + \frac{2}{n}\sum_{j}J_2(i,j)
\end{aligned}
\end{equation}
Firstly, we need to bound $J_1$.
\begin{equation}
\begin{aligned}
\frac{1}{n}\sum_j J_1(i,j)
&= \frac{1}{n|\N^{(t)}_i|^2}\sum_{j}
\bigg\{ \sum_{i' \in \N^{(t)}_i} (\Tilde P_{i'j}^{(t)} - P_{i'j}^{(t)})\bigg\}^2 \\
&= \frac{1}{n|\N^{(t)}_i|^2}\sum_{j}
\bigg\{
\sum_{i' \in \N^{(t)}_i} (\Tilde P_{i'j}^{(t)} - P_{i'j}^{(t)})^2 \\
&\qquad\quad
+ \sum_{i' \in \N^{(t)}_i}
\sum_{\substack{i'' \in \N^{(t)}_i \\ i'' \neq i'}}
(\Tilde P_{i'j}^{(t)} - P_{i'j}^{(t)})
(\Tilde P_{i''j}^{(t)} - P_{i''j}^{(t)})
\bigg\} \\
&= \frac{1}{|\N^{(t)}_i|^2}
\sum_{i' \in \N^{(t)}_i}
\bigg\{
\frac{1}{n}\sum_{j} (\Tilde P_{i'j}^{(t)} - P_{i'j}^{(t)})^2 \\
&\qquad\quad
+ \frac{1}{n}\sum_{j}
\sum_{\substack{i'' \in \N^{(t)}_i \\ i'' \neq i'}}
(\Tilde P_{i'j}^{(t)} - P_{i'j}^{(t)})
(\Tilde P_{i''j}^{(t)} - P_{i''j}^{(t)})
\bigg\}
\end{aligned}
\end{equation}
For the first term, let 
$X_{i'j}^{(t)} = \tilde P_{i'j}^{(t)} - P_{i'j}^{(t)}$ and decompose
$X_{i'j}^{(t)} = Z_{i'j}^{(t)} + b_{i'j}(t)$ with 
$Z_{i'j}^{(t)} = \tilde P_{i'j}^{(t)} - {\rm E}\tilde P_{i'j}^{(t)}$ and 
$b_{i'j}(t) = {\rm E}\tilde P_{i'j}^{(t)} - P_{i'j}^{(t)}$. Then
\begin{equation}
\begin{aligned}
{\rm Var}\bigl[(X_{i'j}^{(t)})^2\bigr]
&\le {\rm E}\bigl[(X_{i'j}^{(t)})^4\bigr]
 = {\rm E}\bigl[(Z_{i'j}^{(t)} + b_{i'j}(t))^4\bigr] \\
&\le 8\,{\rm E}\bigl[(Z_{i'j}^{(t)})^4\bigr] + 8\,b_{i'j}(t)^4 \\
&\le 8\cdot 16\,{\rm Var}\bigl(\tilde P_{i'j}^{(t)}\bigr)^2 + 8\,b_{i'j}(t)^4 \\
&\le 128 B_2^2 \frac{\rho_n^2}{m^2 h_1^2} + 8 B_1^4 \rho_n^4 h_1^{4\beta}.\\
&\le (128 B_2^2 + 8B_1^4)\frac{\rho_n^2}{m^2 h_1^2} = B_3\frac{\rho_n^2}{m^2 h_1^2} 
\end{aligned}
\end{equation}
Here we used $(a+b)^4 \le 8a^4 + 8b^4$, the sub-Gaussian fourth-moment bound 
${\rm E}\bigl[(Z_{i'j}^{(t)})^4\bigr] \le 16\,{\rm Var}(\tilde P_{i'j}^{(t)})^2$ for the centered kernel estimator $Z_{i'j}^{(t)}$, the variance bound 
${\rm Var}(\tilde P_{i'j}^{(t)}) \le B_2 \rho_n/(m h_1)$ from \eqref{lem2:var}, and the bias bound \eqref{lem2:bias}. Under the bandwidth choice later, the term $\rho_n^4 h_1^{4\beta}$ is of smaller order than $\rho_n^2/(m^2 h_1^2)$.


\noindent
Then, by Bernstein inequality, setting $\epsilon = C_4\rho_n\frac{1}{mh_1}\sqrt{\frac{\log nm}{n}}$ with $C_4^2/6B_3> 1+\gamma$, there exists $\tilde C_3 >0$ such that
\begin{equation}
\begin{aligned}
    {\rm Pr}\Bigg( \max_{i'}   &\frac{|\sum_j \{(\Tilde P^{(t)}_{i'j}-P_{i'j})^2 -  E (\Tilde P^{(t)}_{i'j}-P_{i'j})^2|\}}{n}  \geq \ \  C_4 \rho_n\frac{1}{mh_1}\sqrt{\frac{\log nm}{n}} \ \Bigg)   \\[10pt]
    &\, \leq \, 2n\exp\Bigg( \frac{-\frac{1}{3}n\epsilon^2}{B_3\rho_n^2 \,m^{-2}h_1^{-2} + \frac{1}{3}\epsilon}\Bigg) \,\leq\, 2n\exp\left( -\frac{C_4^2}{6B_3}\log nm\right)
    \\[5pt]
    &\,\leq\, 2(nm)^{1 - \tfrac{C_4^2}{6B_3}} \,\leq \,2(nm)^{-(\Tilde C_3 + \gamma)}
\end{aligned}
\end{equation}

\noindent
Thus, we have that
\begin{equation}
\begin{aligned}
    \max_{i'}\frac{1}{n}\sum_{j} (\Tilde P_{i'j}^{(t)} - P_{i'j}^{(t)})^2 \,&\leq\, 
    C_4 \rho_n\frac{1}{mh_1}\sqrt{\frac{\log nm}{n}} + E(\Tilde P_{i'j}^{(t)} - P_{i'j}^{(t)})^2\\[10pt]
    &\,\leq \, C_4 \rho_n\frac{1}{mh_1}\sqrt{\frac{\log nm}{n}} + \frac{B_2\rho_n}{mh_1} + B_1^2\rho_n^2h_1^{2\beta}
\end{aligned}
\end{equation}
since $E(\Tilde P_{i'j}^{(t)} - P_{i'j}^{(t)})^2 = \sigma_{i'j}^2(t) + b_{i'j}^2(t)$, with probability $1-2(nm)^{-(\Tilde C_3 +\gamma)}$.

\noindent
For the second term,
\begin{equation}
\begin{aligned}
    \frac{1}{n|\N^{(t)}_i|^2}&\sum_{j}\sum_{i' \in \N_i}\sum_{i'' \neq i', i'' \in \N^{(t)}_i }(\Tilde P_{i'j}^{(t)} - P_{i'j}^{(t)})(\Tilde P_{i''j}^{(t)} - P_{i''j}^{(t)})\\[5pt]
    & \leq \frac{1}{|\N^{(t)}_i|^2}\sum_{i' \in \N^{(t)}_i}\sum_{i'' \neq i', i'' \in \N^{(t)}_i }\Bigg|\frac{1}{n}\sum_{j}(\Tilde P_{i'j}^{(t)} - P_{i'j}^{(t)})(\Tilde P_{i''j}^{(t)} - P_{i''j}^{(t)}) \Bigg|\\[5pt]
    &\leq \frac{1}{|\N^{(t)}_i|^2}\sum_{i' \in \N^{(t)}_i}\sum_{i'' \neq i', i'' \in \N^{(t)}_i } \Bigg\{ \Bigg| \frac{1}{n-2} \sum_{j \neq i', i''}    (\Tilde P_{i'j}^{(t)} - P_{i'j}^{(t)})(\Tilde P_{i''j}^{(t)} - P_{i''j}^{(t)})\Bigg|\frac{n-2}{n} \ + \\[5pt]
    \quad &\frac{|(\Tilde P_{i'i'}^{(t)} - P_{i'i'}^{(t)})(\Tilde P_{i''i'}^{(t)} - P_{i''i'}^{(t)})|}{n} + \frac{|(\Tilde P_{i'i''}^{(t)} - P_{i'i''}^{(t)})(\Tilde P_{i''i''}^{(t)} - P_{i''i''}^{(t)})|}{n} \Bigg\}
\end{aligned}
\end{equation}
Recall $X_{ij}^{(t)} = \Tilde P_{ij}^{(t)} - P_{ij}^{(t)}.$
To apply Bernstein inequality, we need a variance bound for the product
$X_{i'j}^{(t)}X_{i''j}^{(t)}$. We have
\begin{equation}
\begin{aligned}
{\rm Var}
\big(X_{i'j}^{(t)} X_{i''j}^{(t)}\big)
&\le
{\rm E}
\Big(X_{i'j}^{(t)^2} X_{i''j}^{(t)^2}\Big)\\
&={\rm E}\Big(X_{i'j}^{(t)^2}\Big) \cdot {\rm E}\Big(X_{i''j}^{(t)^2}\Big) \\
&= \Big(\sigma_{i'j}^2(t)+b_{i'j}^2(t)\Big)
\Big(\sigma_{i''j}^2(t)+b_{i''j}^2(t)\Big) \\
&\le
\left(
\frac{B_2\rho_n}{mh_1}
+
B_1^2\rho_n^2h_1^{2\beta}
\right)^2.
\end{aligned}
\end{equation}
Setting $\epsilon = C_4(\tfrac{\rho_n}{\sqrt{mh_1}} + \rho_n^2h_1^{2\beta}) \sqrt{\tfrac{\log nm}{n}}$ with $C_4(\tfrac{\rho_n}{\sqrt{mh_1}} + \rho_n^2h_1^{2\beta}) \sqrt{\tfrac{n}{\log nm}} > 2 +\gamma$ (this condition holds under Lemma~\ref{lem4} conditions), using Bernstein inequality and the previous variance bound yields
\begin{equation}
\begin{aligned}
    {\rm Pr}\Bigg( \max_{i', i'': i' \neq i''} &\bigg | \frac{1}{n-2} \sum_{j \neq i', i''} X_{i'j}X_{i''j} - {\rm E}X_{i'j}X_{i''j}
    \bigg| \geq C_4\big(\frac{\rho_n}{\sqrt{mh_1}} + \rho_n^2h_1^{2\beta}\big) \sqrt{\frac{\log nm}{n}}
    \Bigg) \\[5pt]
    &\,\leq\, 2n^2\exp\left(
    \frac{-\frac{1}{3}(n-2)\epsilon^2}{\left(
\frac{B_2\rho_n}{mh_1} + B_1^2\rho_n^2h_1^{2\beta}
\right)^2 + \frac{1}{3}\epsilon}\right)\\[5pt]
    &\, \leq\,
    2n^2\max\bigg\{(nm)^{-C_4}, (nm)^{-C_4(\tfrac{\rho_n}{\sqrt{mh_1}}+\rho_n^2h_1^{2\beta})\sqrt{\tfrac{n}{\log nm}}}\bigg\}
    \\[5pt]
    &\,\leq\, 2(nm)^{-(\Tilde C_3+\gamma)}
\end{aligned}
\end{equation}

\noindent
Then, by plugging the above results, with probability $1-2(nm)^{-(\Tilde C_3 + \gamma)}$, we have
\begin{equation}
\begin{aligned}
\frac{1}{n} J_1(i,j)
&\le 
\frac{1}{|\N^{(t)}_i|^{2}}
\sum_{i' \in \N^{(t)}_i}
\Bigg\{
      \Bigl(
           C_4\rho_n\frac{1}{m h_1}\sqrt{\frac{\log nm}{n}}
           +\frac{B_2\rho_n}{mh_1}
           +B_1^{2}\rho_n^{2}h_1^{2\beta}
      \Bigr) \\[4pt]
&\hphantom{\le 
\frac{1}{|\N^{(t)}_i|^{2}}
\sum_{i' \in \N^{(t)}_i}\Bigg\{}
      +3(|\N_i|-1)
      \Bigl(
           C_4\rho_n\sqrt{\frac{\log nm}{nmh_1}}
           + C_4B_1^2\rho_n^2h_1^{2\beta}\sqrt{\frac{\log nm}{n}}
      \Bigr)
\Bigg\} \\[6pt]
&\le 
\frac{1}{|\N^{(t)}_i|}
\Bigl(
     C_4\rho_n\frac{1}{m h_1}\sqrt{\frac{\log nm}{n}}
     +\frac{B_2\rho_n}{mh_1}
     +B_1^{2}\rho_n^{2}h_1^{2\beta}
\Bigr) \\[4pt]
&\qquad
+3C_4\rho_n\sqrt{\frac{\log nm}{nmh_1}}
+3C_4B_1^2\rho_n^2h_1^{2\beta}\sqrt{\frac{\log nm}{n}}
\\[8pt]
&\le 
\frac{1}{C_0}\sqrt{\frac{m h_1}{n\log nm}}
\Bigl(
     C_4\rho_n\frac{1}{m h_1}\sqrt{\frac{\log nm}{n}}
     +\frac{B_2\rho_n}{mh_1}
     +B_1^{2}\rho_n^{2}h_1^{2\beta}
\Bigr) \\[4pt]
&\qquad
+3C_4\rho_n\sqrt{\frac{\log nm}{nmh_1}}
+3C_4B_1^2\rho_n^2h_1^{2\beta}\sqrt{\frac{\log nm}{n}}
\\[8pt]
&=
\frac{C_4\rho_n}{C_0 n\sqrt{m h_1}}
+\frac{B_2\rho_n}{C_0\sqrt{m h_1}}\frac{1}{\sqrt{n\log nm}}
+\frac{B_1^{2}\rho_n^{2}h_1^{2\beta}}{C_0}\sqrt{\frac{m h_1}{n\log nm}}
\\[4pt]
&\qquad
+3C_4\rho_n\sqrt{\frac{\log nm}{nmh_1}}
+3C_4B_1^2\rho_n^2h_1^{2\beta}\sqrt{\frac{\log nm}{n}}
\\[8pt]
&\,\le\,
\bigl(B_1^{2} + B_2 + 4C_4\bigr) / C_0\,
\rho_n\sqrt{\frac{\log nm}{nmh_1}} + +3C_4B_1^2\rho_n^2h_1^{2\beta}\sqrt{\frac{\log nm}{n}}.
\end{aligned}
\end{equation}
The last inequality is because the fourth term dominates the preceding three.

\noindent
Using Lemma~\ref{lem3} and \ref{lem4}, $J_2$ can be bounded as follows:
\begin{equation}
\begin{aligned}
    \frac{1}{n} J_2(i,j) &= \frac{1}{n}\sum_j \Bigg(
    \frac{\sum_{i' \in \N_i}(P_{i'j} - P_{ij})}{|\N_i|}
    \Bigg)^2 \\[10pt]
    &\leq  \frac{\sum_{i' \in \N_i}\|P_{i.}- P_{i'.}\|^2_2/n}{|\N_i|}\\[10pt]
    &\leq  (6L_2(C_1+C_2) + 12C_3)\rho_n \sqrt{\frac{\log nm}{nmh_1}} + 24B_0B_1\rho_n^2 h_1^{\beta}
\end{aligned}
\end{equation}
with probability $1-2(nm)^{-(\Tilde C_1 +\gamma)}-2(nm)^{-(\Tilde C_2 +\gamma)}$.

\noindent
Then, summing everything up, with probability $1-2(nm)^{-(\Tilde C_1 +\gamma)}-2(nm)^{-(\Tilde C_2 +\gamma)} - 2(nm)^{-(\Tilde C_3 +\gamma)}$,
\begin{equation}
    \frac{1}{n} \big(J_1(i,j) + J_2(i,j)\big) \leq 
    C\rho_n \sqrt{\frac{\log nm}{nmh_1}} + B\rho_n^2 h_1^{\beta}
\end{equation}
where $C = (B_1+B_1^2+ 4C_4)/C_0 + 3C_4 + 6L_2(C_1+C_2) + 12C_3$ and $B = 24B_0B_1 + 3C_4B_1^2$. The bias term of $J_1$ is absorbed  into the bias term of $J_2$.

To choose the optimal bandwidth $h_1$, we balance these two terms. Ignoring multiplicative constants for the purpose of finding the minimizer, we solve
\begin{equation}
\rho_n \sqrt{\frac{\log nm}{nmh_1}}=\rho_n^2 h_1^{\beta}.
\end{equation}
Equivalently,
\begin{equation}
\rho_n \sqrt{\frac{\log nm}{nm}}\; h_1^{-1/2}
= \rho_n^2 h_1^{\beta}.
\end{equation}
Rearranging terms yields
\begin{equation}
h_1^{2\beta + 1}
=
{\frac{\log nm}{\rho_n^2 nm}}.
\end{equation}
In particular, up to a multiplicative constant depending only on $C$ and $B$, we have
\begin{equation}
h_1^\star
\;\asymp\;
\Biggl(
\frac{\log nm}{\rho_n^2 nm}
\Biggr)^{\frac{1}{2\beta+1}}.
\end{equation}
Plugging the chosen $h_1^\star$ into the error bound, we have
\begin{equation}
\frac{1}{n}\bigl(J_1(i,j)+J_2(i,j)\bigr)
\;\;\le\;\;
    C \,\rho_n^{\frac{2\beta+2}{2\beta+1}}
    \Bigl(\dfrac{\log nm}{nm}\Bigr)^{\frac{\beta}{2\beta+1}}
\end{equation}
under the condition $\rho_n \gtrsim \sqrt{mh^\star_1\log nm/n}$, which can be rearranged by $\rho_n^2 \;\gtrsim\;  m^{\frac{\beta}{\beta+1}} \log nm / n$. 

However, the condition $\rho_n^2  \;\gtrsim\; m^{\frac{\beta}{\beta+1}} \log nm/n$ may fail to hold when $m$ grows much faster than $n$. In this regime, we set $h_2 \asymp 1/n$, which forces the neighborhood to degenerate to a singleton, $\N_i = \{i\}$. Consequently, the double-smoothing estimator reduces to the local polynomial estimator, that is, $\hat P_{ij}^{(t)} = \Tilde P_{ij}^{(t)}$ for all node pairs $(i,j)$.

\noindent
With $\N^{(t)}_i=\{i\}$, we have $J_2(i,j)=0$, and $J_1(i,j)$ simplifies to
\begin{equation}
\begin{aligned}
    \frac{1}{n} J_1(i,j) 
    &= \frac{1}{n|\N^{(t)}_i|^2} \sum_{j} 
       \Biggl\{ \sum_{i' \in \N^{(t)}_i} 
       \bigl(\Tilde P_{i'j}^{(t)} - P_{i'j}^{(t)}\bigr) \Biggr\}^{\!2} \\
    &= \frac{1}{n} \sum_{j} 
       \bigl(\Tilde P_{ij}^{(t)} - P_{ij}^{(t)}\bigr)^2.
\end{aligned}
\end{equation}
Define the stochastic component by $Z_{ij}^{(t)} = \Tilde{P}_{ij}^{(t)} - {\rm E}\Tilde{P}_{ij}^{(t)}$ and the bias component by $b_{ij}(t) = {\rm E}\Tilde{P}_{ij}^{(t)} - P_{ij}^{(t)}$.

\begin{equation}
|\Tilde{P}_{ij}^{(t)} - P_{ij}^{(t)}| \le |Z_{ij}^{(t)}| +  |b_{ij}(t)|.    
\end{equation}
The bias is bounded by $B_1 \rho_n h_1^\beta$ , \eqref{lem2:bias}. We need to control the stochastic term $Z_{ij}^{(t)}$ using Bernstein's inequality. The variance is ${\rm Var}(Z_{ij}^{(t)}) \le \frac{B_2 \rho_n}{mh_1}$.
Recall
\begin{equation}
Z_{ij}^{(t)} = \sum_k w_k(t) (A_{ij}^{(t_k)} - P_{ij}^{(t_k)}) 
\end{equation}
and the summands $w_k(t) (A_{ij}^{(t_k)} - P_{ij}^{(t_k)})$ are bounded by $M = \frac{B_0}{mh_1}$ (Lemma~\ref{lem:weights}). By Bernstein, we have
\begin{equation}
\begin{aligned}
    {\rm Pr}(|Z_{ij}^{(t)}| > \epsilon) &\,\le\, 2 \exp\left( -\frac{\epsilon^2/2}{\frac{B_2 \rho_n}{mh_1} + \frac{B_0}{mh_1} \frac{\epsilon}{3}} \right)\\[8pt]
    &\,\le\, 2 \max\!\left\{
    \exp\!\left(
        -\,\frac{\epsilon^{2}}{4}\,\frac{m h_{1}}{B_{2}\rho_{n}}
    \right),
    \;
    \exp\!\left(
        -\,\frac{3\epsilon}{4}\,\frac{m h_{1}}{B_{0}}
    \right)
\right\}
\end{aligned}
\end{equation}
Setting the RHS to $2(nm)^{-c}$ requires:
\begin{equation}
    \epsilon \ge B_4\sqrt{\frac{\rho_n \log nm}{mh_1}} \quad \vee \quad B_4\frac{\log nm}{mh_1}.
\end{equation}
The sparsity coefficient cannot fall below $\tfrac{\log nm}{m h_1}$ once the optimal $h_1$ is selected in this regime. Consequently, we set 
$\epsilon = B_4\,\sqrt{\frac{\rho_n\,\log nm}{m h_1}}.$ Then the above term can be re-organized as
\begin{equation}
\begin{aligned}
{\rm Pr}\!\left(|Z_{ij}^{(t)}| > \epsilon\right)
&\le 
2 \exp\!\left(
        -\,\frac{B_{4}^{2}\rho_{n}\log nm}{4B_{2}\rho_{n}}
    \right) = 2(nm)^{-\,\tfrac{B_{4}^{2}}{4B_{2}}},.
\end{aligned}
\end{equation}
If $B_4^2/4B_2 > 2+\gamma$, there exists $\tilde B_0$ such that
\begin{equation}
    {\rm Pr}(|Z_{ij}^{(t)}| > \epsilon) \le 2(nm)^{-(\tilde B_0 +\gamma+2)}
\end{equation}
for some $\gamma > 0$.

\noindent
Hence, by taking a union bound over $i\neq j$ node pairs, we obtain
\begin{equation}
    {\rm Pr}\left(\max_{i \neq j} \, |Z_{ij}^{(t)}| > B_4\sqrt{\frac{\rho_n\log nm}{mh_1}}\right) \,\le\, 2(nm)^{2-(\tilde B_0 +\gamma+2)} \,\le\, 2(nm)^{-(\tilde B_0 +\gamma)}
\end{equation}

\noindent
Using $(a+b)^{2} \le 2a^{2}+2b^{2}$, we obtain
\begin{equation}
\begin{aligned}
(\tilde P_{ij}^{(t)} - P_{ij}^{(t)})^{2}
&= (Z_{ij}^{(t)} + b_{ij}(t))^{2} \\
&\le 2\bigl(Z_{ij}^{(t)}\bigr)^{2} + 2\bigl(b_{ij}(t)\bigr)^{2} \\
&\le
2\epsilon^{2}
+
2 B_{1}^{2}\rho_{n}^{2}h_{1}^{2\beta}.
\end{aligned}
\end{equation}
Substituting the expression for $\epsilon$ gives
\begin{equation}
(\tilde P_{ij}^{(t)} - P_{ij}^{(t)})^{2}
\;\le\;
2B_4^2
\frac{\rho_{n}\log nm}{m h_{1}}
+
2 B_{1}^{2}\rho_{n}^{2}h_{1}^{2\beta},
\end{equation}
with probability at least $1 - 2(nm)^{-(\tilde B_{0}+\gamma)}$.

\noindent
With this error bound, we can choose the optimal $h_1^\star$ as
\begin{equation}
    h_1^\star \asymp \left(\frac{\log nm}{\rho_n m} \right)^{\tfrac{1}{2\beta+1}}
\end{equation}
Since Theorem~\ref{thm:double-smooth} assumes $\rho_n \ge \sqrt{\log nm/ nm}$, the optimal bandwidth $h_1^\star$ is well-defined and satisfies $h_1^\star \to 0$ and $mh_1^\star \to \infty$ as $m \to \infty$.

\noindent
Plugging it into the final error bound, we have
\begin{equation}
    \frac{1}{n}(J_1(i,j) + J_2(i,j)) \,\le\, B\rho_n^\frac{2\beta+2}{2\beta+1}\left( \frac{\log nm}{m}\right)^{\frac{2\beta}{2\beta+1}}
\end{equation}
where $B = 2B_1^2 + 2B_4^2$.

\noindent
In summary, for some absolute constant $C, \tilde C >0$,
\begin{equation}
\begin{aligned}
\|\hat P^{(t)} - P^{(t)} \|_{2,\infty}^2/n &\,\le\,
\begin{cases}
C\rho_n^\frac{2\beta+2}{2\beta+1}\left( \dfrac{\log nm}{nm}\right)^{\frac{\beta}{2\beta+1}} & \text{ if } \quad n \,\gtrsim\, \rho_n^{-2} m^{\frac{\beta}{\beta+1}}\log nm
\\ 
C\rho_n^\frac{2\beta+2}{2\beta+1}\left( \dfrac{\log nm}{m}\right)^{\frac{2\beta}{2\beta+1}} & \text{ otherwise.}
\end{cases}
\end{aligned}
\end{equation}
with probability $1-2(nm)^{-(\tilde C +\gamma)}$. This completes the proof of Theorem~\ref{thm:double-smooth}.
\end{proof}

\subsection{Proof of Theorem~\ref{thm:three-step}}
\begin{proof}
Throughout, \(w_k(t)\) denotes the third-stage temporal weights with bandwidth \(h_3\). We start from the decomposition
\begin{equation}
\bar P^{(t)} - P^{(t)}
= \sum_{k=1}^m w_k(t)\big(\hat P^{(t_k)}-P^{(t_k)}\big)
  + \sum_{k=1}^m w_k(t)\big(P^{(t_k)}-P^{(t)}\big).
\end{equation}
Using \((a+b)^2\le 2a^2+2b^2\),
\begin{equation}
\begin{aligned}
\frac1n\|\bar P^{(t)} - P^{(t)}\|_{2,\infty}^2
&\le 
\frac{2}{n}
\Big\|\sum_{k=1}^m w_k(t)(\hat P^{(t_k)}-P^{(t_k)})\Big\|_{2,\infty}^2
+
\frac{2}{n}
\Big\|\sum_{k=1}^m w_k(t)(P^{(t_k)}-P^{(t)})\Big\|_{2,\infty}^2 .
\end{aligned}
\end{equation}

For the first term, applying Cauchy–Schwarz and Lemma~\ref{lem:weights} (absolute summability, localization, pointwise bound),  
\begin{equation}
\begin{aligned}
\frac{2}{n}
\Big(\sum_{k=1}^m |w_k(t)|\,\|\hat P^{(t_k)}-P^{(t_k)}\|_{2,\infty}\Big)^2
&\le
\frac{6B_0^2}{n}
\max_{1\le k\le m}\|\hat P^{(t_k)}-P^{(t_k)}\|_{2,\infty}^2 .
\end{aligned}
\end{equation}

By the refined pointwise bound proved in Theorem~\ref{thm:double-smooth}, with probability at least  
\(1-2(nm)^{-(\tilde C+\gamma)}\),
\begin{equation}
\frac1n\|\hat P^{(t_k)} - P^{(t_k)}\|_{2,\infty}^2
\le
\begin{cases}
C\rho_n^{\frac{2\beta+2}{2\beta+1}}
       \Bigl(\frac{\log nm}{nm}\Bigr)^{\frac{\beta}{2\beta+1}},
& n \gtrsim \rho_n^{-2}m^{\beta/(\beta+1)}\log nm,\\[6pt]
C\rho_n^{\frac{2\beta+2}{2\beta+1}}
       \Bigl(\frac{\log nm}{m}\Bigr)^{\frac{2\beta}{2\beta+1}},
& \text{otherwise}.
\end{cases}
\end{equation}

\noindent
Substituting this into the previous display,
\begin{equation}
\begin{aligned}
\frac{2}{n}
\Big(\sum_{k=1}^m |w_k(t)|\,\|\hat P^{(t_k)}-P^{(t_k)}\|_{2,\infty}\Big)^2
\lesssim
\begin{cases}
\rho_n^{\frac{2\beta+2}{2\beta+1}}
\Bigl(\frac{\log nm}{nm}\Bigr)^{\frac{\beta}{2\beta+1}},\\[6pt]
\rho_n^{\frac{2\beta+2}{2\beta+1}}
\Bigl(\frac{\log nm}{m}\Bigr)^{\frac{2\beta}{2\beta+1}}.
\end{cases}
\end{aligned}
\end{equation}

\noindent
For the second term, Assumption~\ref{assump:smooth} gives
\begin{equation}
\frac1n\|P^{(t_k)}-P^{(t)}\|_{2,\infty}^2
\le L_1|t_k-t|^{\beta}.
\end{equation}
The same argument as in the stochastic term (Cauchy–Schwarz + Lemma~\ref{lem:weights}) yields the deterministic bound
\begin{equation}
\frac{2}{n}
\Big\|\sum_{k=1}^m w_k(t)(P^{(t_k)}-P^{(t)})\Big\|_{2,\infty}^2
\le
6\rho_n^2B_0^2L_1 h_3^{\beta}.
\end{equation}

\noindent
Thus, up to a multiplicative constant, the overall error decomposes as
\begin{equation}
\begin{aligned}
\frac{1}{n}\|\bar P^{(t)}-P^{(t)}\|_{2,\infty}^2
\;\lesssim\;
\begin{cases}
\rho_n^{\frac{2\beta+2}{2\beta+1}}
       \Bigl(\dfrac{\log nm}{nm}\Bigr)^{\frac{\beta}{2\beta+1}}
       + \rho_n^2 h_3^{\beta},
& n \gtrsim \rho_n^{-2}m^{\beta/(\beta+1)}\log nm,
\\[6pt]
\rho_n^{\frac{2\beta+2}{2\beta+1}}
       \Bigl(\dfrac{\log nm}{m}\Bigr)^{\frac{2\beta}{2\beta+1}}
       + \rho_n^2h_3^{\beta},
& \text{otherwise}.
\end{cases}
\end{aligned}
\end{equation}

To optimize the third-stage bandwidth $h_3$, we balance the bias $h_3^{\beta}$
with the corresponding stochastic term in each regime. In the first regime
$n \gtrsim \rho_n^{-2}m^{\beta/(\beta+1)}\log nm$, we set
\begin{equation}
h_3^{\beta}
\;\asymp\;
\rho_n^{\frac{-2\beta}{2\beta+1}}
\Bigl(\frac{\log nm}{nm}\Bigr)^{\frac{\beta}{2\beta+1}},
\end{equation}
which yields
\begin{equation}
h_3
\;\asymp\;
\Biggl\{
\rho_n^{\frac{2\beta+2}{2\beta+1}}
\Bigl(\frac{\log nm}{nm}\Bigr)^{\frac{\beta}{2\beta+1}}
\Biggr\}^{1/\beta}
\;=\;
\rho_n^{-\frac{2}{2\beta+1}}
\Bigl(\frac{\log nm}{nm}\Bigr)^{\frac{1}{2\beta+1}}.
\end{equation}
For this choice of $h_3$, the bias and stochastic terms are of the same order,
and the resulting rate is
\begin{equation}
\frac{1}{n}\|\bar P^{(t)}-P^{(t)}\|_{2,\infty}^2
\;\lesssim\;
\rho_n^{\frac{2\beta+2}{2\beta+1}}
\Bigl(\frac{\log nm}{nm}\Bigr)^{\frac{\beta}{2\beta+1}},
\qquad
n \gtrsim \rho_n^{-2}m^{\beta/(\beta+1)}\log nm.
\end{equation}

\noindent
In the complementary regime, we balance
\begin{equation}
h_3^{\beta}
\;\asymp\;
\rho_n^{\frac{2\beta+2}{2\beta+1}}
\Bigl(\frac{\log nm}{m}\Bigr)^{\frac{2\beta}{2\beta+1}},
\end{equation}
which gives
\begin{equation}
h_3
\;\asymp\;
\Biggl\{
\rho_n^{\frac{2\beta+2}{2\beta+1}}
\Bigl(\frac{\log nm}{m}\Bigr)^{\frac{2\beta}{2\beta+1}}
\Biggr\}^{1/\beta}
\;=\;
\rho_n^{-\frac{2}{2\beta+1}}
\Bigl(\frac{\log nm}{m}\Bigr)^{\frac{2}{2\beta+1}}.
\end{equation}
Again, the bias and stochastic contributions match in order, and we obtain
\begin{equation}
\frac{1}{n}\|\bar P^{(t)}-P^{(t)}\|_{2,\infty}^2
\;\lesssim\;
\rho_n^{\frac{2\beta+2}{2\beta+1}}
\Bigl(\frac{\log nm}{m}\Bigr)^{\frac{2\beta}{2\beta+1}},
\qquad
n \lesssim \rho_n^{-2}m^{\beta/(\beta+1)}\log nm.
\end{equation}
Both choices of $h_3$ satisfy $h_3\to 0$ and $mh_3\to\infty$ as $n,m\to\infty$,
so the third-stage smoothing bounds hold uniformly
in $t\in[0,1]$. This completes the proof of Theorem~\ref{thm:three-step}.
\end{proof}
\end{appendix}

\end{document}